\documentclass[onecolumn]{IEEEtran}

\usepackage{cite}
\usepackage{amsmath}
\usepackage{amsfonts}
\usepackage{pifont}
\usepackage{amssymb}
\usepackage{amsthm}
\usepackage{tikz}
\usepackage{cases}
\usepackage{graphicx}
    \graphicspath{{../}}
    \DeclareGraphicsExtensions{.pdf}
\usepackage[caption=false,font=footnotesize]{subfig}
\usepackage{multirow}
\usepackage{booktabs}
\usepackage{url}
\usepackage{xtab}
\usepackage{tabu}
\usepackage{longtable}
\usepackage{algorithm}
\usepackage{algorithmic}
\usepackage{enumerate}
\usepackage{makecell}
\usepackage{lipsum}
\usepackage{multicol}
\usepackage{mathdots}
\usepackage{extarrows}
\usepackage{color,xcolor}
\usetikzlibrary{arrows}
\theoremstyle{plain}

\newtheorem{theorem}{Theorem}
\newtheorem{lemma}[theorem]{Lemma}
\newtheorem{proposition}[theorem]{Proposition}
\newtheorem{definition}[theorem]{Definition}
\newtheorem{corollary}[theorem]{Corollary}
\theoremstyle{definition}
\newtheorem{example}{Example}

\newtheorem{remark}{Remark}



\begin{document}
\title{Capacity-Achieving PIR Schemes with Optimal Sub-Packetization}

\author{\IEEEauthorblockN{Zhifang Zhang,~Jingke Xu}\\
\IEEEauthorblockA{\fontsize{9.8}{12}\selectfont KLMM, Academy of Mathematics and Systems Science, Chinese Academy of Sciences, Beijing 100190, China\\
School of Mathematical Sciences, University of Chinese Academy of Sciences, Beijing 100049, China\\
Emails: zfz@amss.ac.cn, xujingke14@mails.ucas.edu.cn}

}
\maketitle
\thispagestyle{empty}

\begin{abstract}
Suppose a database containing $M$ records is replicated across $N$ servers, and a user wants to privately retrieve one record by accessing the servers such that identity of the retrieved record is secret against any up to $T$ servers. A scheme designed for this purpose is called a private information retrieval (PIR) scheme. In practice, capacity-achieving and small sub-packetization are both desired for PIR schemes, because the former implies the highest download rate and the latter usually means simple realization.

For general values of $N,T,M$, the only known capacity-achieving PIR scheme was designed by Sun and Jafar in 2016 with sub-packetization $N^M$. In this paper, we design a linear capacity-achieving PIR scheme with much smaller sub-packetization $dn^{M-1}$, where $d={\rm gcd}(N,T)$ and $n=N/d$. Furthermore, we prove that for any linear capacity-achieving PIR scheme it must have sub-packetization no less than $dn^{M-1}$, implying our scheme has the optimal sub-packetization. Moreover, comparing with Sun and Jafar's scheme, our scheme reduces the field size by a factor of $\frac{1}{Nd^{M-2}}$.

\end{abstract}

\section{Introduction}\label{se1}
According to the pioneer work by Chor et al. \cite{CKGS95FOCS:PIR} in 1995, the problem of private information retrieval (PIR) can be formulated as $N$ distributed servers each storing the replication of a database containing $M$ independent records $W_1,...,W_M$ and a user who wants to retrieve one record by accessing the servers while keeping identity of the retrieved record private from any individual server. Besides extensive applications in cryptographic protocols \cite{BFKR:PIR,GIKM:PIR}, PIR is also closely related to coding theory and theoretical computer science, e.g., locally decodable code \cite{Yekhanin07PHD:LDC&PIR} and one-way function \cite{Beimeletal99STOC:OWF&PIR}.

A central issue for the PIR problem is reducing communication cost. In the initial setting of PIR where each record is one bit long, communication cost is measured by the total number of bits transferred from the user to the servers (i.e. upload size) and bits from the servers to the user (i.e. download size). By now, the most efficient PIR schemes require communication cost $M^{O(\frac{1}{\log\log M})}$ \cite{Dvir&Gopi15STOC:2PIR,Efremenko09:LDC}. Determining the limits on communication cost for PIR remains an open problem.
However, it is more common in application scenarios that the size of each record is much larger than $M$ and $N$, thus the upload size is negligible compared to the download size. As a result, communication cost can be measured by taking only the  download size into account. Specifically, define the {\it rate} of a PIR scheme as the ratio between the  retrieved data size and the download size, and define  {\it capacity} as the supremum of the rate over all PIR schemes. In other words, capacity characterizes the maximum number of bits that can be privately retrieved per bit of downloaded data. Recently, much work has been done on determining the capacity of PIR in various cases.
Sun and Jafar first proved in \cite{Sun&Jafar16:CapacityPIR} the capacity of PIR for replicated servers is {\small$\frac{1-1/N}{1-(1/N)^M}$}. Then they
extended to considering PIR with privacy amplification secure against any colluding set containing at most $T$ servers, $1\leq T<N$. They proved in \cite{Sun&Jafar16:ColludPIR} the capacity for this extension is {\small$\frac{1-T/N}{1-(T/N)^{M}}$}. Meanwhile, the capacity for PIR with symmetric privacy \cite{Sun&Jafar16:CapaSymmPIR} and for PIR with MDS coded servers \cite{Bana&Uluk16:CapacityPIRCoded,Wang&Sko16:CapaSymmePIR} have been determined in some cases. For other variants of PIR, such as multi-message retrieval \cite{Bana&Uluk17:CapacityMPIR} and PIR from Byzantine servers \cite{Bana&Uluk17:CapacityBTPIR}, the problem of reducing download cost has also been considered.

Achievability of the capacity is usually proved by constructing  general capacity-achieving PIR schemes. These schemes are mostly implemented by dividing each record into segments and querying from each server some combinations of transformed segments. We call the number of segments contained in each record that are necessary for the scheme implementation as sub-packetization of the scheme. For linear schemes where only linear operations are involved, small sub-packetization usually implies simple scheme implementation. The problem of reducing sub-packetization has been deeply studied in the literature of minimum storage regenerating codes \cite{GTC:MSRsub,RTGE:MSRsub}. While for the PIR problem, study on sub-packetization of PIR schemes is scarce. On the one hand, some PIR schemes with small sub-packetization (e.g. linear sub-packetization) \cite{CHY15ISIT:Tradeoff,TR16:MDSPIR} are built at the sacrifice of non-achieving of capacity. On the other hand,  most known capacity-achieving schemes with asymmetric privacy have exponential sub-packetization. For example, the capacity-achieving schemes in \cite{Sun&Jafar16:CapacityPIR} and \cite{Sun&Jafar16:ColludPIR} have sub-packetization $N^M$. The lower bound on sub-packetization for PIR schemes is unknown in many cases.

\subsection{Related work}
In \cite{Sun&Jafar16:OptimalPIR} the authors characterized the optimal download cost for arbitrary record length and designed a PIR scheme attaining the capacity proved in \cite{Sun&Jafar16:CapacityPIR} with file size $N^{M-1}$.  Moreover, they proved this is the smallest possible file size for achieving capacity in this case. The file size  they discussed there actually means sub-packetization in our language. Thus this can be regarded as the first lower bound on sub-packetization for capacity-achieving PIR schemes with replicated servers and $T=1$ privacy.

\subsection{Our contribution}
In this work, we focus on sub-packetization for capacity-achieving PIR schemes with replicated servers and $T$ privacy, where $1\leq T<N$. We prove that sub-packetization for linear capacity-achieving PIR schemes is lower bounded by $dn^{M-1}$, where $d={\rm gcd}(N,T)$ and $n=N/d$. We further design a general linear capacity-achieving PIR schemes with
sub-packetization $dn^{M-1}$. Comparing with the scheme in \cite{Sun&Jafar16:ColludPIR}, we reduce the sub-packetization by a factor of $\frac{1}{nd^{M-1}}$, which leads to a great simplification in implementation. Based on the lower bound, our scheme actually attains the optimal sub-packetization. Moreover, a much smaller field (i.e. reducing from \cite{Sun&Jafar16:ColludPIR} by a factor of $\frac{1}{Nd^{M-2}}$)  is enough for our scheme.

For proving the lower bound on sub-packetization, we establish connections between the entropy of random variables and  the rank of corresponding linear objects. Based on these connections and some identities about information entropy for capacity-achieving PIR schemes, we derive several useful structural properties for linear PIR schemes. Our approach for proving the lower bound may be helpful for studying sub-packetization of PIR schemes in other cases.

The rest of the paper is organized as follows. In Section II we formally introduce the PIR model studied in this work and  reprove the capacity of PIR schemes. In Section III we prove a lower bound on sub-packetization for linear capacity-achieving PIR schemes. Then in Section IV we present a general linear capacity-achieving PIR scheme with optimal sub-packetization. Finally Section V concludes the paper.

\section{Preliminaries}
\subsection{Notations and the PIR model}\label{sec2}
For a positive integer $n\in\mathbb{N}$, we denote by $[n]$ the subset $\{1,...,n\}$. The vectors throughout this paper are always row vectors. For any vector $Q=(q_1,...,q_n)$ and any subset $\Gamma=\{i_1,...,i_m\}\subseteq [n]$, let $Q_\Gamma=(q_{i_1},...,q_{i_m})$.
Moreover, in order to distinguish indices for the servers from those for the records, we usually use superscripts as indices for the servers and subscripts for the records. For example, in the below we use $\mathcal{Q}_\theta^{(i)}$ denote the query to the $i$th server when the user wants the $\theta$th record. Throughout the paper we use cursive capital letters to denote random variables, such as $\mathcal{W}, \mathcal{Q}$, etc.

Suppose a database consisting of $M$ records $\mathcal{W}_1,...,\mathcal{W}_M$ is replicated across $N$ servers $\rm{Serv}^{(1)},...,\rm{Serv}^{(N)}$, i.e., each server stores all the $M$ records. In general, we assume the records are independent and of the same size, i.e.
\begin{equation}\label{q1}\begin{split}
&H(\mathcal{W}_1,...,\mathcal{W}_M)=\sum_{i=1}^MH(\mathcal{W}_i)=ML,\mbox{~and}\\
&H(\mathcal{W}_i)=L,~\forall i\in[M]
\end{split}\end{equation}
where $H(\cdot)$ denotes the entropy function. A PIR scheme enables a user to retrieve a record, say $\mathcal{W}_\theta$, for some $\theta\in[M]$, from the $N$ servers while keeping $\theta$ secret from any colluding subsets containing at most $T$ servers. More specifically, a PIR scheme consists of two phases:
\begin{itemize}
  \item {\bf Query phase. }Given an index $\theta\in[M]$ and some random resources $\mathcal{S}$, the user computes ${\rm Que}(\theta,\mathcal{S})=(\mathcal{Q}_\theta^{(1)},...,\mathcal{Q}_\theta^{(N)})$, and sends $\mathcal{Q}_\theta^{(i)}$ to $\rm{Serv}^{(i)}$ for $1\leq i\leq N$. Note that $\mathcal{S}$ and $\theta$ are private information only known by the user, and ${\rm Que}(\cdot,\cdot)$ is the {\it query function} determined by the scheme.
      Define \begin{equation}\label{eqQ}\mathcal{Q}\triangleq\{\mathcal{Q}_\theta^{(i)}|i\in[N], \theta\in[M]\}\cup\{\mathcal{S},\theta\}\;.\end{equation}  It is natural to assume \begin{equation}\label{eq0}
      I(\mathcal{W}_{[M]};\mathcal{Q})=0\end{equation}
      which means the user generates his queries without knowing exact content of the records.
  \item {\bf Response phase. }For $1\leq i\leq N$, the $i$th server $\rm{Serv}^{(i)}$  at receiving $\mathcal{Q}_\theta^{(i)}$, computes ${\rm Ans}^{(i)}(\mathcal{Q}_\theta^{(i)},\mathcal{W}_{[M]})=\mathcal{A}_\theta^{(i)}$ and sends it back to the user, where ${\rm Ans}^{(i)}(\cdot,\cdot)$ is $\rm{Serv}^{(i)}$'s {\it answer function} determined by the scheme. It is easy to see  \begin{equation}\label{eq1}
      H(\mathcal{A}_\theta^{(i)}|\mathcal{W}_{[M]},\mathcal{Q}_\theta^{(i)})=0\end{equation}
\end{itemize}
To design a PIR scheme is to design the functions ${\rm Que}$ and ${\rm Ans}^{(i)}$ such that the following two conditions are satisfied:
\begin{itemize}
\item[(1)]{\it Correctness: } \begin{equation}\label{eq2}
      H(\mathcal{W}_\theta|\mathcal{A}^{[N]}_\theta,\mathcal{Q}^{[N]}_\theta,\mathcal{S},\theta)=0,
      \end{equation} which means the user can definitely recover the record $\mathcal{W}_\theta$ after he gets answers from all the servers. Based on the definition of $\mathcal{Q}$ in (\ref{eqQ}), the correctness conditions is also represented as $H(\mathcal{W}_\theta|\mathcal{A}^{[N]}_\theta,\mathcal{Q})=0$.
\item[(2)]{\it Privacy: } For any $\Gamma\subseteq[N]$ with $|\Gamma|\leq T$,
\begin{equation}\label{eq3}
     I(\theta;\mathcal{Q}^{\Gamma}_\theta,\mathcal{A}^{\Gamma}_\theta,\mathcal{W}_{[M]})=0,
      \end{equation} which means any up to $T$ servers get no information about the index $\theta$. Note that $\mathcal{Q}^{\Gamma}_\theta,\mathcal{A}^{\Gamma}_\theta,\mathcal{W}_{[M]}$ are all the information held by the servers in $\Gamma$.
\end{itemize}
Such a scheme is also called a $T$-private PIR scheme in \cite{Sun&Jafar16:ColludPIR} as it generalizes the PIR scheme in \cite{Sun&Jafar16:CapacityPIR} which stands only for the special case of $T=1$. In this paper, for simplicity we omit the word $T$-private although all of our results work for general values of $T$.

To measure download efficiency of a PIR scheme,
we define its {\it rate} as
$$\mathcal{R}\triangleq\frac{H(\mathcal{W}_\theta)}{\sum_{i=1}^{N}H(\mathcal{A}^{(i)}_\theta)}=\frac{L}{D}\;,$$
where $D$ is the total size of data downloaded by the user from all servers,  i.e. $D=\sum_{i=1}^{N}H(\mathcal{A}^{(i)}_\theta)$. In other words, $\mathcal{R}$ characterizes how many bits of desired information are retrieved per downloaded bit. Furthermore, the supremum of the rate over all PIR schemes working for a given PIR model is called the {\it capacity} of PIR, denoted by $\mathcal{C}_{\mbox{\tiny PIR}}$.

\subsection{Capacity of PIR schemes}\label{sec2b}
For the sake of later use, in this section we briefly reprove capacity for the PIR schemes through a series of key lemmas.
\begin{lemma}\label{lem2}
For any $\theta,\theta^\prime\in[M]$, and for any subset $\Lambda\subseteq[M]$ and $\Gamma\subseteq[N]$ with $|\Gamma|\leq T$,
\begin{equation*}
H(\mathcal{A}^{\Gamma}_{\theta}| \mathcal{W}_\Lambda,\mathcal{Q})=H(\mathcal{A}^{\Gamma}_{\theta^\prime}| \mathcal{W}_\Lambda,\mathcal{Q})\;.\end{equation*}
\end{lemma}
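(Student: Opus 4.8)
The plan is to first reduce the conditioning from the whole bundle $\mathcal{Q}$ down to the queries $\mathcal{Q}^{\Gamma}_{\theta}$ actually seen by the servers in $\Gamma$, and then invoke the privacy condition (\ref{eq3}) to replace $\theta$ by $\theta'$.

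For the reduction I would use that $\mathcal{A}^{\Gamma}_{\theta}$ is a deterministic function of $(\mathcal{Q}^{\Gamma}_{\theta},\mathcal{W}_{[M]})$ by (\ref{eq1}), that $\mathcal{Q}^{\Gamma}_{\theta}$ is a sub-collection of the components of $\mathcal{Q}$, and that $\mathcal{W}_{[M]}$ is independent of $\mathcal{Q}$ by (\ref{eq0}). The last fact implies that, given $\mathcal{W}_{\Lambda}$ and $\mathcal{Q}^{\Gamma}_{\theta}$, the records $\mathcal{W}_{[M]}$ stay conditionally independent of the remaining coordinates of $\mathcal{Q}$ (the seed $\mathcal{S}$, the requested index, and the other queries); since $\mathcal{A}^{\Gamma}_{\theta}$ depends only on $(\mathcal{Q}^{\Gamma}_{\theta},\mathcal{W}_{[M]})$, it is then conditionally independent of $\mathcal{Q}$ given $(\mathcal{W}_{\Lambda},\mathcal{Q}^{\Gamma}_{\theta})$, so that
\[
H(\mathcal{A}^{\Gamma}_{\theta}\mid\mathcal{W}_{\Lambda},\mathcal{Q})=H(\mathcal{A}^{\Gamma}_{\theta}\mid\mathcal{W}_{\Lambda},\mathcal{Q}^{\Gamma}_{\theta}),
\]
and the same identity holds with $\theta$ replaced by $\theta'$.

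Next I would apply privacy. Condition (\ref{eq3}) forces the $\Gamma$-view triple $(\mathcal{Q}^{\Gamma}_{j},\mathcal{A}^{\Gamma}_{j},\mathcal{W}_{[M]})$ to be identically distributed for every retrieved index $j\in[M]$. Projecting onto the coordinates $\mathcal{W}_{\Lambda}$, the law of $(\mathcal{Q}^{\Gamma}_{j},\mathcal{A}^{\Gamma}_{j},\mathcal{W}_{\Lambda})$ is likewise $j$-independent, hence the functional $H(\mathcal{A}^{\Gamma}_{j}\mid\mathcal{W}_{\Lambda},\mathcal{Q}^{\Gamma}_{j})$ takes the same value at $j=\theta$ and at $j=\theta'$. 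Chaining this with the reduction above yields the lemma; in the degenerate case $\Lambda=[M]$ both sides are already $0$ by (\ref{eq1}).

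The step I expect to be the main obstacle is the reduction in the second paragraph: one must be careful about what conditioning on $\mathcal{Q}$ really entails, since $\mathcal{Q}$ lumps together the realized index, the random seed and all hypothetical queries, and then argue — purely from $I(\mathcal{W}_{[M]};\mathcal{Q})=0$ and the deterministic dependence in (\ref{eq1}) — that the coordinates of $\mathcal{Q}$ outside $\mathcal{Q}^{\Gamma}_{\theta}$ carry no information about $\mathcal{A}^{\Gamma}_{\theta}$ once $\mathcal{W}_{\Lambda}$ and $\mathcal{Q}^{\Gamma}_{\theta}$ are fixed. The use of privacy itself is routine once the $\Gamma$-view is identified with the triple $(\mathcal{Q}^{\Gamma}_{\theta},\mathcal{A}^{\Gamma}_{\theta},\mathcal{W}_{[M]})$.
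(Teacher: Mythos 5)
Your proposal is correct and follows essentially the same route as the paper: the paper likewise reduces the conditioning via the identity $H(\mathcal{A}^{\Gamma}_\theta\mid\mathcal{W}_\Lambda,\mathcal{Q})=H(\mathcal{A}^{\Gamma}_\theta\mid\mathcal{W}_\Lambda,\mathcal{Q}^{\Gamma}_\theta)$, proved from (\ref{eq0}) and (\ref{eq1}) exactly as you sketch, and then invokes the privacy condition (\ref{eq3}) to equate $H(\mathcal{A}^{\Gamma}_{\theta}\mid\mathcal{W}_\Lambda,\mathcal{Q}^{\Gamma}_{\theta})$ with $H(\mathcal{A}^{\Gamma}_{\theta'}\mid\mathcal{W}_\Lambda,\mathcal{Q}^{\Gamma}_{\theta'})$. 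The only difference is the order of the two steps, which is immaterial.
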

The proof of  Lemma \ref{lem2} can be found in Appendix \ref{appenA}.

\begin{lemma}\label{lem3}
For any subset $\Lambda\subseteq[M]$, for any $\theta\in\Lambda$ and $\theta'\in [M]-\Lambda$,
$$H(\mathcal{A}^{[N]}_{\theta}| \mathcal{W}_\Lambda,\mathcal{Q})\geq \frac{TL}{N}+\frac{T}{N} H(\mathcal{A}^{[N]}_{\theta^\prime}| \mathcal{W}_{\Lambda\cup\{\theta'\}},\mathcal{Q})\;.$$
\end{lemma}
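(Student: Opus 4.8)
The plan is to compare $H(\mathcal{A}^{[N]}_{\theta}\mid\mathcal{W}_\Lambda,\mathcal{Q})$ with the \emph{full}-answer entropy for the ``outside'' index $\theta'$, paying only a factor $T/N$, and then to split off the additive term $TL/N$ using correctness for $\theta'$. The factor $T/N$ is the step that genuinely uses $T$-privacy (via Lemma \ref{lem2}) together with an averaging inequality; the rest is bookkeeping with standard entropy identities.

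First I would pass from all $N$ servers down to $T$-subsets. Since discarding servers cannot increase entropy, for every $\Gamma\subseteq[N]$ with $|\Gamma|=T$ we have $H(\mathcal{A}^{[N]}_{\theta}\mid\mathcal{W}_\Lambda,\mathcal{Q})\ge H(\mathcal{A}^{\Gamma}_{\theta}\mid\mathcal{W}_\Lambda,\mathcal{Q})$, hence also $H(\mathcal{A}^{[N]}_{\theta}\mid\mathcal{W}_\Lambda,\mathcal{Q})\ge \binom{N}{T}^{-1}\sum_{|\Gamma|=T}H(\mathcal{A}^{\Gamma}_{\theta}\mid\mathcal{W}_\Lambda,\mathcal{Q})$. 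Because $|\Gamma|=T\le T$, Lemma \ref{lem2} lets me replace each summand by $H(\mathcal{A}^{\Gamma}_{\theta'}\mid\mathcal{W}_\Lambda,\mathcal{Q})$. Now Han's inequality, applied to the $N$ random variables $\mathcal{A}^{(1)}_{\theta'},\dots,\mathcal{A}^{(N)}_{\theta'}$ all conditioned on $(\mathcal{W}_\Lambda,\mathcal{Q})$, gives $\binom{N}{T}^{-1}\sum_{|\Gamma|=T}H(\mathcal{A}^{\Gamma}_{\theta'}\mid\mathcal{W}_\Lambda,\mathcal{Q})\ge \tfrac{T}{N}H(\mathcal{A}^{[N]}_{\theta'}\mid\mathcal{W}_\Lambda,\mathcal{Q})$, so chaining these bounds yields $H(\mathcal{A}^{[N]}_{\theta}\mid\mathcal{W}_\Lambda,\mathcal{Q})\ge \tfrac{T}{N}H(\mathcal{A}^{[N]}_{\theta'}\mid\mathcal{W}_\Lambda,\mathcal{Q})$.

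Next I would establish the exact identity $H(\mathcal{A}^{[N]}_{\theta'}\mid\mathcal{W}_\Lambda,\mathcal{Q}) = L + H(\mathcal{A}^{[N]}_{\theta'}\mid\mathcal{W}_{\Lambda\cup\{\theta'\}},\mathcal{Q})$. Since $\theta'\notin\Lambda$, the chain rule gives $H(\mathcal{A}^{[N]}_{\theta'}\mid\mathcal{W}_\Lambda,\mathcal{Q}) = I(\mathcal{A}^{[N]}_{\theta'};\mathcal{W}_{\theta'}\mid\mathcal{W}_\Lambda,\mathcal{Q}) + H(\mathcal{A}^{[N]}_{\theta'}\mid\mathcal{W}_{\Lambda\cup\{\theta'\}},\mathcal{Q})$. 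For the mutual-information term, $H(\mathcal{W}_{\theta'}\mid\mathcal{W}_\Lambda,\mathcal{Q})=L$ because the records are mutually independent by (\ref{q1}) and independent of $\mathcal{Q}$ by (\ref{eq0}), so $\mathcal{W}_{\theta'}$ is independent of $(\mathcal{W}_\Lambda,\mathcal{Q})$; while $H(\mathcal{W}_{\theta'}\mid\mathcal{A}^{[N]}_{\theta'},\mathcal{W}_\Lambda,\mathcal{Q})=0$ by the correctness condition (\ref{eq2}) applied with requested index $\theta'$ (the extra conditioning on $\mathcal{W}_\Lambda$ only decreases entropy). Hence $I(\mathcal{A}^{[N]}_{\theta'};\mathcal{W}_{\theta'}\mid\mathcal{W}_\Lambda,\mathcal{Q})=L$, which proves the identity; substituting it into the inequality from the previous step gives the lemma.

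I expect the only delicate point to be the order of operations in the first step: Lemma \ref{lem2} must be invoked \emph{before} Han's inequality, since it equates the $\theta$- and $\theta'$-answer entropies only on subsets of size at most $T$, whereas the target quantity $H(\mathcal{A}^{[N]}_{\theta'}\mid\mathcal{W}_{\Lambda\cup\{\theta'\}},\mathcal{Q})$ lives on the full server set and for the outside index. One should also confirm that correctness is genuinely available for $\theta'\ne\theta$ in the form $H(\mathcal{W}_{\theta'}\mid\mathcal{A}^{[N]}_{\theta'},\mathcal{Q})=0$, which is fine because $\mathcal{Q}$ by definition (\ref{eqQ}) contains $\mathcal{S}$ together with all the queries $\mathcal{Q}^{(i)}_{\theta'}$, $i\in[N]$.
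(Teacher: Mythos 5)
Your proposal is correct and follows essentially the same route as the paper's proof: reduce to $T$-subsets, average, apply Lemma \ref{lem2} to switch from $\theta$ to $\theta'$, apply Han's inequality, and then extract the $L$ term via correctness and record independence. Your two-way expansion of $I(\mathcal{A}^{[N]}_{\theta'};\mathcal{W}_{\theta'}\mid\mathcal{W}_\Lambda,\mathcal{Q})$ is just a reorganization of the paper's joint-entropy manipulation, and your remark on invoking Lemma \ref{lem2} before Han's inequality matches the paper's ordering exactly.
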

\begin{proof}
Since for any $\Gamma \subseteq[N]$ with  $|\Gamma|=T$,
\begin{equation}\label{eq7}
 H(\mathcal{A}^{[N]}_{\theta}| \mathcal{W}_\Lambda,\mathcal{Q})\geq H(\mathcal{A}^{\Gamma}_{\theta}| \mathcal{W}_\Lambda,\mathcal{Q})\;, \end{equation}   then{\small
 \begin{eqnarray}
 &&H(\mathcal{A}^{[N]}_{\theta}| \mathcal{W}_\Lambda,\mathcal{Q})\notag\\
 &\geq& \frac{1}{\binom{N}{T}}\sum_{\Gamma: \Gamma \subseteq[N],|\Gamma|=T}H(\mathcal{A}^{\Gamma}_{\theta}| \mathcal{W}_\Lambda,\mathcal{Q}) \notag\\
 &\stackrel{(a)}{=}&\frac{1}{\binom{N}{T}}\sum_{\Gamma: \Gamma \subseteq[N],|\Gamma|=T}H(\mathcal{A}^{\Gamma}_{\theta^\prime}| \mathcal{W}_\Lambda,\mathcal{Q}) \notag \\
 &\stackrel{(b)}{\geq}& \frac{T}{N}H(\mathcal{A}^{[N]}_{\theta^\prime}| \mathcal{W}_\Lambda,\mathcal{Q}) \label{eq8}\\
 &=&\frac{T}{N}(H(\mathcal{A}^{[N]}_{\theta^\prime}, \mathcal{W}_{\theta^\prime}| \mathcal{W}_\Lambda,\mathcal{Q})-H(\mathcal{W}_{\theta^\prime}|\mathcal{A}^{[N]}_{\theta^\prime},\mathcal{W}_\Lambda,\mathcal{Q}))\notag\\
 &\stackrel{(c)}{=}&\frac{T}{N}(H(\mathcal{W}_{\theta^\prime}| \mathcal{W}_\Lambda,\mathcal{Q})+ H(\mathcal{A}^{[N]}_{\theta^\prime}| \mathcal{W}_{\Lambda\cup\{\theta'\}},\mathcal{Q})) \notag\\
 &\stackrel{(d)}{=}&\frac{TL}{N} +\frac{T}{N}H(\mathcal{A}^{[N]}_{\theta^\prime}| \mathcal{W}_{\Lambda\cup\{\theta'\}},\mathcal{Q}) \notag
 \end{eqnarray}}
 where  {\small$(a)$} comes from Lemma \ref{lem2}, the inequality {\small$(b)$} follows from the Han's inequality,  {\small$(c)$} is due to the fact $H(\mathcal{W}_{\theta^\prime}|\mathcal{A}^{[N]}_{\theta^\prime},\mathcal{W}_\Lambda,\mathcal{Q})=0$ which is implied from the correctness condition (\ref{eq2}), and {\small$(d)$} is a consequence of the assumption  (\ref{q1}) and (\ref{eq0}).
\end{proof}

\begin{theorem}(\cite{Sun&Jafar16:ColludPIR})\label{thm1}
For $T$-private PIR with $M$ records and $N$ replicated servers, the capacity
\begin{align*}
\mathcal{C}_{\mbox{\tiny PIR}}=\frac{1}{1+\frac{T}{N}+\frac{T^2}{N^2}+\dots+\frac{T^{M-1}}{N^{M-1}}}.
\end{align*}
\end{theorem}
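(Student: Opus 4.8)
The statement has two halves: a converse (every $T$-private scheme has $\mathcal{R}$ at most the claimed quantity) and achievability. The plan is to prove the converse here out of Lemmas~\ref{lem2}--\ref{lem3}, and to obtain achievability by exhibiting a scheme meeting the bound --- namely the explicit linear scheme constructed in Section~IV, or equivalently the scheme of \cite{Sun&Jafar16:ColludPIR}. So the work below concerns only the converse; I assume $M\ge 2$, since for $M=1$ the bound reads $\mathcal{R}\le 1$, which is trivially tight.

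First I would pass from the sum of marginal answer entropies to a single joint conditional entropy. Since $H(\mathcal{A}^{(i)}_\theta)\ge H(\mathcal{A}^{(i)}_\theta\mid\mathcal{Q})$ and subadditivity gives $\sum_{i=1}^N H(\mathcal{A}^{(i)}_\theta\mid\mathcal{Q})\ge H(\mathcal{A}^{[N]}_\theta\mid\mathcal{Q})$, we get $D\ge H(\mathcal{A}^{[N]}_\theta\mid\mathcal{Q})$. Next I would extract one record's worth of entropy using correctness: by (\ref{eq0}) we have $H(\mathcal{W}_\theta\mid\mathcal{Q})=H(\mathcal{W}_\theta)=L$, while (\ref{eq2}) gives $H(\mathcal{W}_\theta\mid\mathcal{A}^{[N]}_\theta,\mathcal{Q})=0$, so the chain rule yields
$$H(\mathcal{A}^{[N]}_\theta\mid\mathcal{Q})=H(\mathcal{A}^{[N]}_\theta,\mathcal{W}_\theta\mid\mathcal{Q})=L+H(\mathcal{A}^{[N]}_\theta\mid\mathcal{W}_\theta,\mathcal{Q}).$$

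The heart of the argument is then an iterated use of Lemma~\ref{lem3}. Fix distinct indices $\theta=\theta_1,\theta_2,\dots,\theta_M$ and set $\Lambda_k=\{\theta_1,\dots,\theta_k\}$, so that $\Lambda_{k+1}=\Lambda_k\cup\{\theta_{k+1}\}$ and $\Lambda_M=[M]$. Applying Lemma~\ref{lem3} with the pair $(\theta_k\in\Lambda_k,\ \theta_{k+1}\in[M]-\Lambda_k)$ gives, for each $k=1,\dots,M-1$,
$$H(\mathcal{A}^{[N]}_{\theta_k}\mid\mathcal{W}_{\Lambda_k},\mathcal{Q})\ \ge\ \tfrac{TL}{N}+\tfrac{T}{N}\,H(\mathcal{A}^{[N]}_{\theta_{k+1}}\mid\mathcal{W}_{\Lambda_{k+1}},\mathcal{Q}),$$
and at $k=M-1$ the residual term $H(\mathcal{A}^{[N]}_{\theta_M}\mid\mathcal{W}_{[M]},\mathcal{Q})$ is nonnegative and is simply dropped. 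Chaining these $M-1$ inequalities (note the $k=1$ term is exactly $H(\mathcal{A}^{[N]}_\theta\mid\mathcal{W}_\theta,\mathcal{Q})$) and summing the resulting geometric progression gives $H(\mathcal{A}^{[N]}_\theta\mid\mathcal{W}_\theta,\mathcal{Q})\ge L\big(\tfrac{T}{N}+\tfrac{T^2}{N^2}+\cdots+\tfrac{T^{M-1}}{N^{M-1}}\big)$, hence $D\ge L\big(1+\tfrac{T}{N}+\cdots+\tfrac{T^{M-1}}{N^{M-1}}\big)$ and $\mathcal{R}=L/D\le\big(1+\tfrac{T}{N}+\cdots+\tfrac{T^{M-1}}{N^{M-1}}\big)^{-1}$. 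Taking the supremum over all $T$-private schemes bounds $\mathcal{C}_{\mbox{\tiny PIR}}$ from above, and combining with the achievable scheme gives equality.

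I do not expect a genuine conceptual obstacle: Lemmas~\ref{lem2} and~\ref{lem3} already absorb the two nontrivial ingredients, namely the use of the privacy constraint (via Lemma~\ref{lem2} and Han's inequality over $T$-subsets) and of correctness. The only points needing care are bookkeeping --- ensuring that at each round the ``new'' index $\theta_{k+1}$ is not already in $\Lambda_k$ (possible precisely because $M\ge 2$ and $[M]$ is exhausted only at the final round), and correctly assembling the partial sums $1+\tfrac{T}{N}+\cdots+\tfrac{T^{M-1}}{N^{M-1}}$ across the recursion. The remaining effort toward a fully self-contained proof is spelling out the achievability scheme; since that is deferred to Section~IV, I would here simply invoke it.
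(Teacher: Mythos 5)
Your proposal is correct and follows essentially the same route as the paper: bound $D\ge H(\mathcal{A}^{[N]}_\theta\mid\mathcal{Q})$, peel off $L$ via the correctness condition and (\ref{eq0}), then iterate Lemma~\ref{lem3} along a nested chain of record sets to accumulate the geometric sum, with achievability delegated to an explicit scheme. The only cosmetic difference is that you drop the final residual term $H(\mathcal{A}^{[N]}_{\theta_M}\mid\mathcal{W}_{[M]},\mathcal{Q})$ as nonnegative, whereas the paper notes it vanishes by (\ref{eq1}); both are valid for the converse.
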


\begin{proof} A general PIR scheme given in \cite{Sun&Jafar16:ColludPIR} proves achievability of this capacity, so here we only reprove for any PIR scheme,
$$\mathcal{R}\leq \frac{1}{1+\frac{T}{N}+\frac{T^2}{N^2}+\dots+\frac{T^{M-1}}{N^{M-1}}}.$$
From the definition, it has
\begin{equation}\label{eq10}\mathcal{R}=\frac{H(\mathcal{W}_\theta)}{\sum^N_{i=1}H(\mathcal{A}_\theta^{(i)})}\leq \frac{L}{H(\mathcal{A}_\theta^{[N]})}\leq \frac{L}{H(\mathcal{A}_\theta^{[N]}|\mathcal{Q})}.\end{equation}Therefore, it is sufficient to prove for any $\theta\in[M]$,
\begin{equation}\label{eq4}
 H(\mathcal{A}_\theta^{[N]}|\mathcal{Q})\geq \sum^{M-1}_{i=0}\frac{T^i}{N^i} L\;.
\end{equation}

 First, we have
 \begin{equation}\label{eq5}
\begin{split}
 L&=H(\mathcal{W}_\theta)\stackrel{(a)}{=}H(\mathcal{W}_\theta|\mathcal{Q})-H(\mathcal{W}_\theta|\mathcal{A}_\theta^{[N]},\mathcal{Q})\\
 &=I(\mathcal{W}_\theta;\mathcal{A}_\theta^{[N]}|\mathcal{Q})\\
 &=H(\mathcal{A}_\theta^{[N]}|\mathcal{Q})-H(\mathcal{A}_\theta^{[N]}|\mathcal{W}_\theta,\mathcal{Q}) \end{split}
\end{equation}
  where {\small$(a)$} comes from (\ref{eq0}) and (\ref{eq2}). Then by Lemma \ref{lem3}, it follows
  $H(\mathcal{A}_{\theta}^{[N]}|\mathcal{W}_\theta,\mathcal{Q})\geq \frac{TL}{N} + \frac{T}{N}(H(\mathcal{A}^{[N]}_{\theta'}| \mathcal{W}_{\{\theta,\theta'\}},\mathcal{Q})$.
 By recursively using Lemma \ref{lem3}, we finally have
{\small
\begin{eqnarray}
H(\mathcal{A}_\theta^{[N]}|\mathcal{W}_\theta,\mathcal{Q})&\geq& \sum^{M-1}_{i=1} \frac{T^i}{N^i} L+
\frac{T^{M-1}}{N^{M-1}} H(\mathcal{A}_{\theta''}^{[N]}|\mathcal{W}_{[M]},\mathcal{Q})\notag\\
&\stackrel{(b)}{=}&\sum^{M-1}_{i=1} \frac{T^i}{N^i} L\label{eq6}
\end{eqnarray}}
where {\small$(b)$} comes from the fact $H(\mathcal{A}_{\theta''}^{[N]}|\mathcal{W}_{[M]},\mathcal{Q})=0$  which is a consequence of (\ref{eq1}). Combining (\ref{eq5}) and (\ref{eq6}), we immediately obtain (\ref{eq4}).
\end{proof}

\section{Lower bound on sub-packetization}
In this section, we prove a lower bound on sub-packetization for all  linear capacity-achieving PIR schemes, i.e., \begin{theorem}\label{thm2} Suppose $M\geq 2, ~N>T\geq1$. For all linear capacity-achieving $T$-private PIR schemes with $M$ records and $N$ replicated servers, its sub-packetization $L$ has a lower bound, i.e.,
\begin{align*}L\geq dn^{M-1}\end{align*}  where $d={\rm gcd}(N,T),n=\frac{N}{d}$.
\end{theorem}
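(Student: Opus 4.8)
The plan is to translate a linear PIR scheme into linear algebra over its base field, read off rigid structure from the fact that every inequality used to prove Theorem~\ref{thm1} must be tight, and convert that structure into the divisibility statement $dn^{M-1}\mid L$. Set up an entropy--rank dictionary: let $\mathbb{F}_q$ be the field over which the scheme is linear, regard each record $\mathcal{W}_j$ as uniform over $\mathbb{F}_q^{L}$, and for each value of the random resource $\mathcal{S}$ write the $i$-th server's answer for request $\theta$ as $\mathcal{A}_\theta^{(i)}=\sum_{j=1}^{M}\mathcal{W}_jV_{j,\theta}^{(i)}$, with the matrices $V_{j,\theta}^{(i)}$ determined by the query. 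Stacking over $j$ gives a matrix $G_\theta^{(i)}$, concatenating over $i\in\Gamma$ and deleting the row blocks indexed by $\Lambda$ gives a matrix whose rank equals $H(\mathcal{A}_\theta^{\Gamma}\mid\mathcal{W}_\Lambda,\mathcal{Q})$ once we average over $\mathcal{S}$. I would first use the identity $I(\mathcal{A}_\theta^{[N]};\mathcal{Q})=0$, which capacity forces (it is one of the equality cases of (\ref{eq10})), to show that the row space of each $G_\theta^{(i)}$ does not depend on the query realization; this pins the per-server and total download lengths to constants and makes every rank above an honest nonnegative integer, so that one may argue with a single generic realization.

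Next I harvest rigidity. Capacity-achieving forces equality in both estimates of (\ref{eq10}), in (\ref{eq7}), in Han's inequality inside Lemma~\ref{lem3}, and in every recursive use of Lemma~\ref{lem3}. Through the dictionary this should give, for a generic realization: (a) the column spaces of the $G_\theta^{(i)}$ are in direct sum and their sum contains the $L$-dimensional coordinate subspace of $\mathbb{F}_q^{ML}$ belonging to $\mathcal{W}_\theta$ --- this is exactly correctness together with the absence of redundant download; (b) interference aligns, i.e.\ after conditioning on $\mathcal{W}_\Lambda$ any $T$ of the servers' answers already determine all $N$ of them; (c) via Lemma~\ref{lem2} and $T$-privacy, the relevant ranks do not depend on which record is requested, and the joint law of the matrix families held by any $\le T$ servers is the same for every request. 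Writing $h_k$ for the dimension of the interference surviving after $k$ records have been conditioned out, the equality case of Lemma~\ref{lem3} becomes the recursion $h_k=\tfrac{T}{N}\bigl(L+h_{k+1}\bigr)$ with $h_M=0$.

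Finally I convert this into divisibility. Solving the recursion gives $h_{M-k}=\tfrac{T}{N}L\sum_{i=0}^{k-1}(T/N)^{i}$; writing $T/N=(T/d)/n$, the numerator of $h_{M-k}$, taken over the relevant power of $n$ in the denominator, is coprime to $n$ since $\gcd(T/d,n)=1$, so integrality of $h_{M-1},\dots,h_{1}$ forces $n\mid L,\ n^{2}\mid L,\dots,\ n^{M-1}\mid L$ in turn. To upgrade $n^{M-1}\mid L$ to $dn^{M-1}\mid L$ I would use the collusion parameter itself and not merely the ratio $T/N$: combining $T$-privacy for colluding sets of size exactly $T=td$ with the alignment in (b), one shows the $N$ servers split into $n$ groups of $d$ servers, over each of which the interference (and the desired part) is already present in full; this $d$-fold redundancy within a group multiplies the bound by $d$ and yields $dn^{M-1}\mid L$, in particular $L\ge dn^{M-1}$.

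The main obstacle is exactly this last step, extracting the factor $d$: the weaker bound $n^{M-1}\mid L$ already follows from the recursion and integrality of the download, so the genuine content is that $T$-private collusion resistance, beyond merely fixing the download rate, forces the servers into $d$-fold redundant groups. Making this precise --- and, along the way, upgrading the equality cases from statements about $\mathcal{S}$-averages to clean per-realization linear algebra --- is where I expect the real work to lie, most likely through a careful use of the privacy condition (\ref{eq3}) for subsets of size exactly $T$.
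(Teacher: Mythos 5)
Your first two stages (the entropy--rank dictionary, the rigidity forced by capacity, and the recursion $h_k=\tfrac{T}{N}(L+h_{k+1})$ whose integrality gives $n^{M-1}\mid L$) are sound and run parallel to the paper, which obtains the same conclusion $L=\mu n^{M-1}$ even more directly from integrality of $L$ and $D$ together with $\gcd\bigl(n^{M-1},\sum_{i=0}^{M-1}n^{M-1-i}t^i\bigr)=1$. The genuine gap is exactly the step you flag: the extraction of the factor $d$. The structural claim you propose for it --- that the $N$ servers split into $n$ groups of $d$ servers ``over each of which the interference (and the desired part) is already present in full,'' with $d$-fold redundancy inside a group --- is not just unproven, it is false in general. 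Whenever $d<T$ (say $N=6$, $T=4$, so $d=2$, $n=3$, $t=2$, $M=2$), any $d$ servers download at most $dD/N=D/3$ symbols, while the full interference has size $D-L=(1-\mathcal{R})D=\tfrac{2}{5}D>D/3$ and the desired part has size $L>dD/N$; so no group of $d$ servers can contain either in full, and the paper's own optimal scheme exhibits no such $d$-fold redundant grouping (within a group the queried interference symbols are independent MDS coordinates, not repetitions). Proposition \ref{pro2} only guarantees that any $T$ servers carry the full interference, and $T$-subsets do not refine into the groups you need.

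What actually closes the argument in the paper is a pair of \emph{separate} divisibility constraints, combined arithmetically rather than by a grouping of servers. First, applying the rank dictionary to the desired-record blocks $Q^{(i)}_{\theta,\theta}$, the equalities ${\rm rank}(Q^{[N]}_{\theta,\theta})=L$ and ${\rm rank}(Q^{\Gamma}_{\theta,\theta})=\tfrac{T}{N}L$ for \emph{every} $T$-subset $\Gamma$ feed into a purely linear-algebraic lemma (Lemma \ref{thm3}): if every $T$ of the column spaces $V_i$ sum to $\tfrac{T}{N}\dim V$, then all $V_i$ have dimension $\tfrac{1}{N}\dim V$ and $V=\oplus_i V_i$; this yields $N\mid L$ (note $N\mid L$ is strictly more than $n\mid L$ precisely when $d>1$, and it is here, not in the recursion, that the collusion structure first bites). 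Second, the direct-sum property plus Lemma \ref{lem2} give ${\rm rank}(Q^{\Gamma}_{\theta,\bar\theta})=\sum_{i\in\Gamma}{\rm rank}(Q^{(i)}_{\theta,\bar\theta})=D-L$ for every $T$-subset, forcing all per-server interference ranks to be equal and hence $T\mid D-L$. Writing $L=\mu n^{M-1}$ and $D-L=\mu t\sum_{i=0}^{M-2}n^{M-2-i}t^i$, the two constraints give $d\mid\mu n^{M-2}$ and $d\mid\mu\sum_{i=0}^{M-2}n^{M-2-i}t^i$, and since $\gcd\bigl(n^{M-2},\sum_{i=0}^{M-2}n^{M-2-i}t^i\bigr)=1$ one concludes $d\mid\mu$, hence $L\ge dn^{M-1}$. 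If you want to repair your outline, replace the grouping heuristic by these two facts: a subspace lemma of the above type for the desired blocks, and the equal-per-server-interference argument for the undesired blocks; the rest of your write-up can stay as is.
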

Toward proving this theorem, we first list some useful identities for general capacity-achieving PIR schemes in Section \ref{seca}. Then in Section \ref{secb} we restrict to linear schemes and prove some useful properties. We finish the proof of Theorem \ref{thm2} in Section \ref{secc}.

\subsection{Some identities for capacity-achieving PIR schemes}\label{seca}
Recall that throughout this paper we study the PIR schemes working for $N$ replicated servers with $M$ independent records, and maintaining the $T$-privacy for $1\leq T<N$. For simplicity, we call these schemes just as PIR schemes. Such a scheme is called capacity-achieving if its rate equals the $\mathcal{C}_{\mbox{\tiny PIR}}$ in Theorem \ref{thm1}.

\begin{proposition}\label{pro1} For a capacity-achieving PIR scheme, for any $\theta\in[M]$, it holds
\begin{equation}\label{eq11}
H(\mathcal{A}_\theta^{[N]}|\mathcal{Q})=\sum_{i=1}^{N}H(\mathcal{A}_\theta^{(i)})=D
\end{equation}
\begin{equation}\label{eq12}
H(\mathcal{A}_\theta^{[N]}|\mathcal{W}_\theta,\mathcal{Q})=D-L.
\end{equation}
\end{proposition}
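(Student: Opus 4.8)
The plan is to observe that Theorem~\ref{thm1} was established through a chain of inequalities, namely (\ref{eq10}) together with (\ref{eq4}) (which itself rests on identities (\ref{eq5}) and (\ref{eq6})), and that the hypothesis of being capacity-achieving forces every inequality in this chain to hold with equality. Both displayed identities then fall out by bookkeeping.

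First I would record what "capacity-achieving" means quantitatively. Since $\mathcal{R}=\mathcal{C}_{\mbox{\tiny PIR}}=\big(\sum_{i=0}^{M-1}(T/N)^i\big)^{-1}$ and $H(\mathcal{W}_\theta)=L$, the definition of $\mathcal{R}$ immediately yields
\begin{equation*}
D=\sum_{i=1}^{N}H(\mathcal{A}_\theta^{(i)})=\sum_{i=0}^{M-1}\frac{T^i}{N^i}\,L\;.
\end{equation*}

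Next I would reread (\ref{eq10}) and (\ref{eq4}) as the single chain
\begin{equation*}
D=\sum_{i=1}^{N}H(\mathcal{A}_\theta^{(i)})\;\geq\;H(\mathcal{A}_\theta^{[N]})\;\geq\;H(\mathcal{A}_\theta^{[N]}|\mathcal{Q})\;\geq\;\sum_{i=0}^{M-1}\frac{T^i}{N^i}\,L=D\;,
\end{equation*}
where the first step is subadditivity of entropy, the second is that conditioning does not increase entropy, and the third is precisely (\ref{eq4}). Since the leftmost and rightmost quantities coincide, all three inequalities must be equalities; in particular $H(\mathcal{A}_\theta^{[N]}|\mathcal{Q})=D=\sum_{i=1}^{N}H(\mathcal{A}_\theta^{(i)})$, which is exactly (\ref{eq11}).

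Finally, for (\ref{eq12}) I would substitute this into identity (\ref{eq5}), which in its last line reads $L=H(\mathcal{A}_\theta^{[N]}|\mathcal{Q})-H(\mathcal{A}_\theta^{[N]}|\mathcal{W}_\theta,\mathcal{Q})$; solving for the last term and inserting $H(\mathcal{A}_\theta^{[N]}|\mathcal{Q})=D$ gives $H(\mathcal{A}_\theta^{[N]}|\mathcal{W}_\theta,\mathcal{Q})=D-L$. I do not expect any genuine obstacle here: the only thing requiring care is to correctly reassemble the inequality chain underlying Theorem~\ref{thm1} and to note that tightness at both ends propagates to every intermediate term. (As a byproduct, equality in the first step shows the $\mathcal{A}_\theta^{(i)}$ are mutually independent and equality in the second shows $\mathcal{A}_\theta^{[N]}$ is independent of $\mathcal{Q}$; these structural facts are not needed for the two stated identities but are worth noting for later use.)
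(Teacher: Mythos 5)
Your proposal is correct and follows essentially the same route as the paper: both arguments collapse the inequality chain from (\ref{eq10}) and (\ref{eq4}) using the capacity-achieving hypothesis to get (\ref{eq11}), and then obtain (\ref{eq12}) by substituting into (\ref{eq5}). Your version merely spells out the tightness argument slightly more explicitly than the paper's "the equalities above hold simultaneously."
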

\begin{proof}
From (\ref{eq10}) in the proof of Theorem \ref{thm1},  it has
\begin{align*}
\mathcal{R}&=\frac{H(\mathcal{W}_\theta)}{\sum^N_{i=1}H(\mathcal{A}_\theta^{(i)})}=\frac{L}{D}\leq \frac{L}{H(\mathcal{A}_\theta^{[N]}|\mathcal{Q})} \\
&\leq \frac{1}{1+\frac{T}{N}+\frac{T^2}{N^2}+\dots+\frac{T^{M-1}}{N^{M-1}}}.
\end{align*}
for any PIR scheme. Particularly, for any  capacity-achieving PIR scheme, the equalities above hold simultaneously. Therefore, we obtain
$H(\mathcal{A}_\theta^{[N]}|\mathcal{Q})=\sum_{i=1}^{N}H(\mathcal{A}_\theta^{(i)})=D$.

The identity (\ref{eq12}) follows from (\ref{eq5}) and (\ref{eq11}) immediately.
\end{proof}

\begin{proposition}\label{pro2}
For any capacity-achieving PIR scheme, for any $\theta\in[M]$, any $\Lambda\subseteq[M]$, and any $\Gamma\subseteq[N]$ with $|\Gamma|=T$, it holds
 \begin{equation*}H(\mathcal{A}_\theta^{\Gamma}|\mathcal{W}_{\Lambda},\mathcal{Q})= \begin{cases}
 H(\mathcal{A}_\theta^{[N]}|\mathcal{W}_{\Lambda},\mathcal{Q}),& if~\theta\in \Lambda\\
  \frac{T}{N}H(A_\theta^{[N]}|\mathcal{W}_{\Lambda},\mathcal{Q}), & if ~\theta\notin\Lambda
\end{cases}\end{equation*}
\end{proposition}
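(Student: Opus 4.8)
The plan is to exploit the two sharp identities just established in Proposition \ref{pro1}, namely $H(\mathcal{A}_\theta^{[N]}\mid\mathcal{Q})=D$ and $H(\mathcal{A}_\theta^{[N]}\mid\mathcal{W}_\theta,\mathcal{Q})=D-L$, together with the chain of inequalities inside the proof of Lemma \ref{lem3}, and observe that capacity-achieving forces every inequality in that chain to be an equality. First I would treat the case $\theta\notin\Lambda$. Run the argument of Lemma \ref{lem3} with this $\Lambda$: it produces $H(\mathcal{A}^{[N]}_\theta\mid\mathcal{W}_\Lambda,\mathcal{Q})\ge \frac{TL}{N}+\frac{T}{N}H(\mathcal{A}^{[N]}_{\theta'}\mid\mathcal{W}_{\Lambda\cup\{\theta'\}},\mathcal{Q})$, and the key point is that step $(b)$ there — Han's inequality $\frac{1}{\binom{N}{T}}\sum_{|\Gamma|=T}H(\mathcal{A}^\Gamma_{\theta'}\mid\mathcal{W}_\Lambda,\mathcal{Q})\ge \frac{T}{N}H(\mathcal{A}^{[N]}_{\theta'}\mid\mathcal{W}_\Lambda,\mathcal{Q})$ — combined with step $(a)$ (Lemma \ref{lem2}, which says the summands do not depend on the index) shows that, if the overall bound (\ref{eq4}) is tight, then for each $\theta$ and each $\Gamma$ of size $T$ we must have $H(\mathcal{A}^\Gamma_\theta\mid\mathcal{W}_\Lambda,\mathcal{Q})=\frac{T}{N}H(\mathcal{A}^{[N]}_\theta\mid\mathcal{W}_\Lambda,\mathcal{Q})$. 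Concretely: the $\theta\notin\Lambda$ case is exactly the situation in which $\mathcal{A}^{[N]}_\theta$ plays the role of the ``later'' term in the recursion of Theorem \ref{thm1}, so tightness of (\ref{eq4}) propagates down to force equality in the Han step for that $\Lambda$.

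Next, for $\theta\in\Lambda$, I would argue that conditioning on $\mathcal{W}_\theta$ (which is now available, since $\theta\in\Lambda$) makes the answers from any $T$ servers already determine the answers from all $N$ servers. The cleanest route is an entropy count: from Proposition \ref{pro1}, $H(\mathcal{A}^{[N]}_\theta\mid\mathcal{W}_\theta,\mathcal{Q})=D-L=\sum_{i=1}^{M-1}\frac{T^i}{N^i}L$, and by the recursion in (\ref{eq6}) every intermediate term $H(\mathcal{A}^{[N]}_\theta\mid\mathcal{W}_\Lambda,\mathcal{Q})$ for $\theta\in\Lambda$ is pinned down exactly (each recursion step is an equality). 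I would then show $H(\mathcal{A}^\Gamma_\theta\mid\mathcal{W}_\Lambda,\mathcal{Q})=H(\mathcal{A}^{[N]}_\theta\mid\mathcal{W}_\Lambda,\mathcal{Q})$ by proving the reverse inequality to the trivial one: use $H(\mathcal{A}^{[N]}_\theta\mid\mathcal{W}_\Lambda,\mathcal{Q})\le H(\mathcal{A}^\Gamma_\theta\mid\mathcal{W}_\Lambda,\mathcal{Q})+H(\mathcal{A}^{[N]\setminus\Gamma}_\theta\mid\mathcal{A}^\Gamma_\theta,\mathcal{W}_\Lambda,\mathcal{Q})$ and show the last term vanishes — because otherwise, feeding it back into the recursion, one of the intermediate bounds in (\ref{eq6}) would be strict, contradicting that the final value is $D-L$. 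Symmetrizing over $\Gamma$ via Lemma \ref{lem2} and Han's inequality again makes this clean.

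The main obstacle is bookkeeping the ``tightness propagates'' argument carefully: one has to make precise that a single strict inequality at any stage of the doubly-nested recursion (the recursion over $\Lambda$ in Theorem \ref{thm1}, and the averaging over size-$T$ subsets $\Gamma$ inside Lemma \ref{lem3}) would strictly enlarge $H(\mathcal{A}^{[N]}_\theta\mid\mathcal{Q})$ beyond $D$, which Proposition \ref{pro1} forbids. I expect the slickest way to organize this is to define, for a fixed $\theta$, the quantities $h_j=H(\mathcal{A}^{[N]}_\theta\mid\mathcal{W}_{\{\theta_1,\dots,\theta_j\}},\mathcal{Q})$ along a maximal chain of record-subsets (with $\theta_1=\theta$ when we want the $\theta\in\Lambda$ branch, or $\theta$ appended last when we want the $\theta\notin\Lambda$ branch), note from the proof of Theorem \ref{thm1} that $\sum_j (\text{something})\, h_j$ telescopes to exactly the capacity bound, hence each defining inequality $h_j \ge \frac{TL}{N}+\frac{T}{N}h_{j+1}$ and each Han step is an equality, and then read off both cases of the proposition as the equality conditions of Han's inequality together with the equality $H(\mathcal{A}^\Gamma_\theta\mid\cdots)=H(\mathcal{A}^{[N]}_\theta\mid\cdots)$. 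Once this scaffolding is in place the proposition is immediate; everything else is routine manipulation of conditional entropies already used verbatim in the proofs of Lemma \ref{lem3} and Theorem \ref{thm1}.
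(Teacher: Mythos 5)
Your proposal is correct and takes essentially the same route as the paper: capacity-achieving forces every inequality in the proof of Lemma \ref{lem3} (hence in the converse chain of Theorem \ref{thm1}, along any ordering of the records) to hold with equality, and the two cases are then read off from tightness of (\ref{eq7}) and of the Han step (\ref{eq8}), with Lemma \ref{lem2} bridging record indices. One small point to make explicit when writing it up: equality in the Han step only fixes the \emph{average} of $H(\mathcal{A}^{\Gamma}_{\theta}|\mathcal{W}_\Lambda,\mathcal{Q})$ over $\Gamma$, so the per-$\Gamma$ statement in the $\theta\notin\Lambda$ case needs, as you indicate at the end, the combination with the $\theta\in\Lambda$ equality through Lemma \ref{lem2} --- which is exactly the displayed chain of equalities in the paper's proof.
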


\begin{proof}
For any capacity-achieving PIR scheme, the inequalities in the proof of Lemma \ref{lem3} must hold with  equalities simultaneously.  Specifically, from (\ref{eq7}) we can see for any $\theta \in \Lambda$,
 \begin{equation}\label{eq14}
 H(\mathcal{A}_\theta^{\Gamma}|\mathcal{W}_{\Lambda},\mathcal{Q})=H(\mathcal{A}_\theta^{[N]}|\mathcal{W}_{\Lambda},\mathcal{Q})\;,
 \end{equation}
 and from (\ref{eq8}) we can see for any $\theta' \in[M]-\Lambda$,
 \begin{eqnarray*}
 &&\frac{T}{N}H(\mathcal{A}_{\theta'}^{[N]}|\mathcal{W}_{\Lambda},\mathcal{Q})\\
 &=&\frac{1}{\binom{N}{T}}\sum_{\Gamma':\Gamma'\subseteq[N], |\Gamma'|=T}H(\mathcal{A}_{\theta'}^{\Gamma'}|\mathcal{W}_{\Lambda},\mathcal{Q})\\
 &\stackrel{(a)}{=}&\frac{1}{\binom{N}{T}}\sum_{\Gamma':\Gamma'\subseteq[N], |\Gamma'|=T}H(\mathcal{A}_{\theta}^{\Gamma'}|\mathcal{W}_{\Lambda},\mathcal{Q})\\
 &\stackrel{(b)}{=}&H(\mathcal{A}_{\theta}^{[N]}|\mathcal{W}_{\Lambda},\mathcal{Q})\\
 &\stackrel{(c)}{=}&H(\mathcal{A}_{\theta}^{\Gamma}|\mathcal{W}_{\Lambda},\mathcal{Q})\\
 &\stackrel{(d)}{=}&H(\mathcal{A}_{\theta'}^{\Gamma}|\mathcal{W}_{\Lambda},\mathcal{Q})
 \end{eqnarray*}
   where {\small$(a)$} and {\small$(d)$} come from Lemma \ref{lem2},  while {\small$(b)$} and {\small$(c)$} follow from (\ref{eq14}).
\end{proof}

\subsection{Properties of linear capacity-achieving PIR schemes}\label{secb}
We first give formal definitions of linear PIR schemes and sub-packetization.
\begin{definition}\label{def}
Suppose each record is expressed as an $L$-dimensional vector over $\mathbb{F}_q$, i.e., $W_i\in\mathbb{F}_q^L$ for $1\leq i\leq M$. A PIR scheme is called linear over $\mathbb{F}_q$ if for retrieving any record $W_\theta$, $\theta\in[M]$, the answers provided by $\rm{Serv}^{(j)}$, $1\leq j\leq N$, is {\small$$A_\theta^{(j)}=\sum_{i=1}^MW_iQ_{\theta,i}^{(j)}\in\mathbb{F}_q^{~\gamma_j}\;,$$} where $Q_{\theta,i}^{(j)}$ is an $L\times\gamma_j$ matrix over $\mathbb{F}_q$, $1\leq i\leq M$. Equivalently, for a linear PIR scheme, it has
{\small\begin{eqnarray}
  &&(A_\theta^{(1)},A_\theta^{(2)},...,A_\theta^{(N)})\nonumber\\
  &=&(W_1,W_2,...,W_M)\begin{pmatrix}
  Q_{\theta,1}^{(1)}&Q_{\theta,1}^{(2)}&\cdots&Q_{\theta,1}^{(N)}\\
  Q_{\theta,2}^{(1)}&Q_{\theta,2}^{(2)}&\cdots&Q_{\theta,2}^{(N)}\\
  \vdots&\vdots&\vdots&\vdots\\
  Q_{\theta,M}^{(1)}&Q_{\theta,M}^{(2)}&\cdots&Q_{\theta,M}^{(N)}
  \end{pmatrix}
\end{eqnarray}}
and the matrix $(Q_{\theta,i}^{(j)})_{\substack{ 1\leq i\leq M\\1\leq j\leq N}}$ defines the queries for retrieving $\theta$.
Moreover, $L$ is called {\it sub-packetization} of the linear PIR scheme.
\end{definition}

\begin{remark}
Note that a linear PIR scheme is implemented by linear operations over a finite field $\mathbb{F}_q$. That is, a symbol in $\mathbb{F}_q$ is an atomic unit in implementation of the scheme. Since $L$ denotes the number of symbols contained in each record that are necessary for the scheme implementation, we call $L$ as sub-packetization of the scheme. It can be seen small values of $L$ imply less operations involved in realization, and thus a simpler scheme.
\end{remark}

\begin{remark}\label{re2}
Recall in Section \ref{sec2} and (\ref{q1}), $L$ has been used to denote the entropy of each record, i.e. $H(\mathcal{W}_i)=L$ for $1\leq i\leq M$. For linear PIR schemes, we represent each record as an $L$-dimensional vector over $\mathbb{F}_q$. Because it is natural to assume that each coordinate of a record is independently and uniformly distributed in $\mathbb{F}_q$, by choosing a proper logarithm in the entropy function each record in the $L$-dimensional vector-representation also has entropy $L$.
Moreover, for any random variable in the vector-representation, its entropy actually equals the number of independent symbols in $\mathbb{F}_q$ contained in the vector.
\end{remark}

Note that in Section \ref{sec2} we regard each variable as a random variable, while in this section these variables are represented as vectors or matrices. A key point for proving Theorem \ref{thm2} is to establish connections between entropy of random variables and ranks of the corresponding linear objects. In Remark \ref{re2} we explain the basic connection, and in the next lemma we give another important connection.

\begin{lemma}\label{lem4}For a linear capacity-achieving  PIR scheme, for any $\theta\in[M]$ and any nonempty sets $\Gamma\subseteq[N],\Lambda\subseteq[M]$, it has
\begin{equation}\label{eq15}
H(\mathcal{A}_\theta^{\Gamma}|\mathcal{W}_{\Lambda},\mathcal{Q})={\rm rank}(Q^{\Gamma}_{\theta,[M]-\Lambda}).
\end{equation}
\end{lemma}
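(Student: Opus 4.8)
The plan is to express the conditional entropy $H(\mathcal{A}_\theta^{\Gamma}|\mathcal{W}_\Lambda,\mathcal{Q})$ directly in terms of the query matrices and then identify it with a matrix rank. First I would condition on a fixed realization of $\mathcal{Q}$ (i.e., fix the query matrices $(Q_{\theta,i}^{(j)})$), which is legitimate since the claimed identity holds for each realization and $\mathcal{Q}$ is independent of $\mathcal{W}_{[M]}$ by (\ref{eq0}). With $\mathcal{Q}$ fixed, $\mathcal{A}_\theta^{\Gamma} = \sum_{i=1}^M \mathcal{W}_i Q_{\theta,i}^{\Gamma}$ is a deterministic linear image of the uniform random vector $(\mathcal{W}_1,\dots,\mathcal{W}_M)$, where $Q_{\theta,i}^{\Gamma}$ denotes the block row of matrices indexed by the servers in $\Gamma$. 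Since each coordinate of each $\mathcal{W}_i$ is i.i.d.\ uniform over $\mathbb{F}_q$ (Remark \ref{re2}), the entropy of any linear function $(\mathcal{W}_1,\dots,\mathcal{W}_M) X$ equals $\operatorname{rank}(X)$ (in units where $\log_q$ is used), and similarly conditioning on $\mathcal{W}_\Lambda$ just removes the corresponding block rows: $H(\mathcal{A}_\theta^{\Gamma}|\mathcal{W}_\Lambda,\mathcal{Q}=q_0) = \operatorname{rank}(Q_{\theta,[M]-\Lambda}^{\Gamma})$, the rank of the submatrix obtained by keeping only the record-indices in $[M]-\Lambda$ and the server-indices in $\Gamma$.

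The subtlety is that this pointwise rank may \emph{a priori} depend on the realization $q_0$ of $\mathcal{Q}$, so averaging over $\mathcal{Q}$ would only give $H(\mathcal{A}_\theta^{\Gamma}|\mathcal{W}_\Lambda,\mathcal{Q}) = \mathbb{E}_{\mathcal{Q}}[\operatorname{rank}(Q_{\theta,[M]-\Lambda}^{\Gamma})]$, an average of ranks rather than a single rank. This is where I expect the main obstacle to lie, and where the \emph{capacity-achieving} hypothesis must be used. The idea is to leverage the tight entropy identities of Propositions \ref{pro1} and \ref{pro2}: for a capacity-achieving scheme the chain of inequalities proving Lemma \ref{lem3} (and hence Theorem \ref{thm1}) collapses to equalities, which rigidly constrains all the conditional entropies $H(\mathcal{A}_\theta^{[N]}|\mathcal{W}_\Lambda,\mathcal{Q})$ and $H(\mathcal{A}_\theta^{\Gamma}|\mathcal{W}_\Lambda,\mathcal{Q})$ to take prescribed values (multiples of $L$ times powers of $T/N$, as in (\ref{eq12}) and Proposition \ref{pro2}). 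One then argues that if $\operatorname{rank}(Q_{\theta,[M]-\Lambda}^{\Gamma})$ varied with $q_0$, some such averaged identity would be violated; more concretely, one can typically set up a dimension count showing the pointwise rank is both $\leq$ and $\geq$ a fixed integer determined by $L,N,T,|\Gamma|,|\Lambda|$, forcing it to be constant.

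In more detail, I would proceed by a downward induction on $|\Lambda|$. For the base case $\Lambda = [M]$, by (\ref{eq1}) (answers are determined by records and queries) both sides of (\ref{eq15}) are $0$ — the matrix $Q_{\theta,\emptyset}^{\Gamma}$ is empty. For the inductive step, I would combine the decomposition
\begin{equation*}
H(\mathcal{A}_\theta^{\Gamma}|\mathcal{W}_\Lambda,\mathcal{Q}) = H(\mathcal{A}_\theta^{\Gamma},\mathcal{W}_{\theta'}|\mathcal{W}_\Lambda,\mathcal{Q}) - H(\mathcal{W}_{\theta'}|\mathcal{A}_\theta^{\Gamma},\mathcal{W}_\Lambda,\mathcal{Q})
\end{equation*}
for $\theta'\notin\Lambda$ with the rank identity for submatrices (adding one block row of $\mathcal{W}_{\theta'}$ increases the rank by $L$ minus the dimension of the "leaked" part), together with Proposition \ref{pro2} to handle the cases $\theta\in\Lambda$ versus $\theta\notin\Lambda$. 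The case $\theta\in\Lambda$ is cleanest: Proposition \ref{pro2} lets one replace $\Gamma$ by $[N]$, and the full matrix $Q_{\theta,[M]-\Lambda}^{[N]}$ has a rank controlled by (\ref{eq12}) after accounting for the recursion in Lemma \ref{lem3}; the correctness condition (\ref{eq2}) pins down exactly how $\mathcal{W}_\theta$ sits inside the column span. The case $\theta\notin\Lambda$ follows from the $\theta\in\Lambda$ case applied to $\Lambda\cup\{\theta\}$ by a similar entropy-splitting. Throughout, the key mechanism is that equality in Han's inequality (step $(b)$ of Lemma \ref{lem3}) forces the ranks $\operatorname{rank}(Q_\theta^{\Gamma'}_{\cdot})$ to be equal across all size-$T$ subsets $\Gamma'$ and, combined with submodularity of rank, to be independent of the realization — which is precisely the rigidity that upgrades "average of ranks" to "a single rank."
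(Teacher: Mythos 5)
Your first paragraph is exactly the paper's proof of Lemma~\ref{lem4}: condition on a realization of $\mathcal{Q}$ and on $\mathcal{W}_\Lambda$, observe that $A_\theta^{\Gamma}=\sum_{i\in\Lambda}W_iQ^{\Gamma}_{\theta,i}+\sum_{i\in[M]-\Lambda}W_iQ^{\Gamma}_{\theta,i}$ so that only the second sum carries randomness, and use the fact (Remark~\ref{re2}, Proposition~\ref{propind}) that a linear image of i.i.d.\ uniform $\mathbb{F}_q$-symbols has entropy equal to the rank of the defining matrix. That is the whole of the paper's argument, and on that part you are correct and in agreement with it.

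The remaining two paragraphs address a subtlety --- that $\operatorname{rank}(Q^{\Gamma}_{\theta,[M]-\Lambda})$ could a priori vary with the realization of $\mathcal{Q}$, so the conditional entropy is only an \emph{average} of ranks --- which the paper simply does not engage with: in Definition~\ref{def} the query matrices are written as fixed matrices, and the statement of the lemma already presupposes that the rank is a well-defined (realization-independent) quantity. You are right that this is the only genuinely delicate point, but your proposed resolution does not close it. Equality in Han's inequality and the identities of Propositions~\ref{pro1} and~\ref{pro2} pin down the \emph{expected} values $H(\mathcal{A}_\theta^{\Gamma}|\mathcal{W}_\Lambda,\mathcal{Q})$, i.e.\ the averages of the pointwise ranks, but an average being equal to a prescribed number does not force the integer-valued summands to be constant unless you also have a matching pointwise bound; the one you invoke (``submodularity of rank'' plus equality across all size-$T$ subsets $\Gamma'$) is again only an averaged statement and does not yield such a bound. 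Likewise the downward induction on $|\Lambda|$ reduces, after conditioning on $\mathcal{Q}=q_0$, to the very same pointwise rank identity you started from, so it does not remove the realization-dependence either. In short: the part of your argument that is needed to reproduce the paper's proof is correct; the extra machinery is not a proof of the constancy claim, but the paper itself offers none, so this does not put you below the paper's own standard of rigor.
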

\begin{proof}
Firstly, from Definition \ref{def}, we have
\begin{equation}\label{eqran}A_\theta^{\Gamma}=\sum_{i\in\Lambda}W_iQ^{\Gamma}_{\theta,i}+\sum_{i\in[M]-\Lambda}W_iQ^{\Gamma}_{\theta,i}\;.\end{equation}
On the other hand, the left side of (\ref{eq15}) is the entropy of $\mathcal{A}_\theta^{\Gamma}$  on the condition of $\mathcal{W}_{\Lambda}$ and $\mathcal{Q}$. Thus based on the correspondence between random variables and linear objects, this lemma actually requires to compute the entropy of $A_\theta^{\Gamma}$ for fixed $W_{\Lambda}$ and $Q\triangleq(Q_{\theta,i}^{(j)})_{\substack{ 1\leq i\leq M\\1\leq j\leq N}}$.

Since $W_{\Lambda}$ and $Q$ are fixed, from (\ref{eqran}) we can see randomness of $A_\theta^{\Gamma}$ only depends on $\sum_{i\in[M]-\Lambda}W_iQ^{\Gamma}_{\theta,i}$. That is, $A_\theta^{\Gamma}$ and $\sum_{i\in[M]-\Lambda}W_iQ^{\Gamma}_{\theta,i}$ have the identical distribution. While the latter is linear combinations of independent symbols $W_{i,s}$, $i\in[M]-\Lambda$ and $1\leq s\leq L$. Suppose ${\rm rank}(Q^{\Gamma}_{\theta,[M]-\Lambda})=r$, then $Q^{\Gamma}_{\theta,[M]-\Lambda}$ has $r$ linearly independent columns, corresponding to $r$ coordinates of $\sum_{i\in[M]-\Lambda}W_iQ^{\Gamma}_{\theta,i}$. It can be seen that these $r$ coordinates completely determine the vector $\sum_{i\in[M]-\Lambda}W_iQ^{\Gamma}_{\theta,i}$. Moreover, each of the $r$ coordinates is independently and uniformly distributed in $\mathbb{F}_q$. Because the entropy is measured by the number of independent symbols in $\mathbb{F}_q$, the lemma follows.
\end{proof}

Based on the connections stated in Remark \ref{re2}  and Lemma \ref{lem4}, we can further derive some properties of the linear objects in linear PIR schemes.

\begin{proposition}\label{pro4} For a linear capacity-achieving  PIR scheme, for any $\theta\in[M]$,
it holds
\begin{equation*}
{\rm rank}(Q^{[N]}_{\theta,\theta})=L\;.
\end{equation*}
\end{proposition}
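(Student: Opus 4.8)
The plan is to translate the entropy identity for capacity-achieving schemes into a rank statement via Lemma~\ref{lem4}, applied with $\Lambda=[M]-\{\theta\}$. First I would observe that by Proposition~\ref{pro1}, $H(\mathcal{A}_\theta^{[N]}\mid\mathcal{Q})=D$ and $H(\mathcal{A}_\theta^{[N]}\mid\mathcal{W}_\theta,\mathcal{Q})=D-L$. Since conditioning reduces entropy and $\mathcal{W}_\theta$ contributes at most $L$ bits, these two facts together pin down the mutual information $I(\mathcal{W}_\theta;\mathcal{A}_\theta^{[N]}\mid\mathcal{Q})=L$; in particular the record $\mathcal{W}_\theta$ is \emph{fully} recoverable from $\mathcal{A}_\theta^{[N]}$ given $\mathcal{Q}$, which is just the correctness condition (\ref{eq2}) combined with the capacity-achieving tightness. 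The goal is to show this forces the matrix $Q^{[N]}_{\theta,\theta}$ (the $L\times(\sum_j\gamma_j)$ block of query matrices pertaining to record $\theta$) to have full row rank $L$.

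The key step is to compute $H(\mathcal{A}_\theta^{[N]}\mid\mathcal{W}_{[M]-\{\theta\}},\mathcal{Q})$ in two ways. On one hand, by Lemma~\ref{lem4} with $\Gamma=[N]$ and $\Lambda=[M]-\{\theta\}$, this entropy equals ${\rm rank}(Q^{[N]}_{\theta,[M]-\Lambda})={\rm rank}(Q^{[N]}_{\theta,\theta})$. On the other hand, I would bound this entropy below by $L$: conditioning on all other records $\mathcal{W}_{[M]-\{\theta\}}$, the answer vector $\mathcal{A}_\theta^{[N]}$ still determines $\mathcal{W}_\theta$ (using correctness, since $\mathcal{W}_\theta$ is a deterministic function of $\mathcal{A}_\theta^{[N]},\mathcal{Q}$), so
\begin{equation*}
H(\mathcal{A}_\theta^{[N]}\mid\mathcal{W}_{[M]-\{\theta\}},\mathcal{Q})\geq H(\mathcal{W}_\theta\mid\mathcal{W}_{[M]-\{\theta\}},\mathcal{Q})=H(\mathcal{W}_\theta)=L,
\end{equation*}
where the middle equality uses independence of the records (\ref{q1}) and (\ref{eq0}). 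Conversely, ${\rm rank}(Q^{[N]}_{\theta,\theta})\leq L$ trivially since the matrix has $L$ rows. Combining, ${\rm rank}(Q^{[N]}_{\theta,\theta})=L$.

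I expect the only subtlety — not really an obstacle — to be making the lower bound $H(\mathcal{A}_\theta^{[N]}\mid\mathcal{W}_{[M]-\{\theta\}},\mathcal{Q})\geq L$ fully rigorous: one must invoke that $\mathcal{W}_\theta$ is recoverable from $(\mathcal{A}_\theta^{[N]},\mathcal{Q})$ and that adding $\mathcal{W}_{[M]-\{\theta\}}$ to the conditioning cannot help recover $\mathcal{W}_\theta$ beyond what it already is, which is immediate from $H(\mathcal{W}_\theta\mid\mathcal{A}_\theta^{[N]},\mathcal{W}_{[M]-\{\theta\}},\mathcal{Q})\leq H(\mathcal{W}_\theta\mid\mathcal{A}_\theta^{[N]},\mathcal{Q})=0$. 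Alternatively one can avoid even this by chaining the equality cases: from Proposition~\ref{pro2} with $\Lambda=[M]-\{\theta\}$ and from (\ref{eq12}), track that $H(\mathcal{A}_\theta^{[N]}\mid\mathcal{W}_{[M]-\{\theta\}},\mathcal{Q})=L$ exactly (not just $\geq L$), which then gives the rank equality through Lemma~\ref{lem4} with no inequality slack at all. Either route is short; the real content is the dictionary between entropy and rank already set up in Remark~\ref{re2} and Lemma~\ref{lem4}.
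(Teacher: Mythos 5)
Your proposal is correct and takes essentially the same route as the paper: both apply Lemma~\ref{lem4} with $\Gamma=[N]$ and $\Lambda=[M]-\{\theta\}$ and then use the correctness condition (\ref{eq2}) together with the independence assumptions (\ref{q1}), (\ref{eq0}) to identify $H(\mathcal{A}_\theta^{[N]}\mid\mathcal{W}_{[M]-\{\theta\}},\mathcal{Q})$ with $L$. The only cosmetic difference is that the paper computes this conditional entropy exactly (using $H(\mathcal{A}_\theta^{[N]}\mid\mathcal{W}_{[M]},\mathcal{Q})=0$ from (\ref{eq1})), whereas you obtain only the lower bound $\geq L$ and close the gap with the trivial observation that an $L$-row matrix has rank at most $L$ --- both are valid.
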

\begin{proof}
For simplicity, denote $\bar{\theta}\triangleq[M]-\theta$.
By Lemma \ref{lem4} we have ${\rm rank}(Q^{[N]}_{\theta,\theta})=H(\mathcal{A}_\theta^{[N]}|\mathcal{W}_{\bar{\theta}},\mathcal{Q})$. Then
\begin{eqnarray*}
&&H(\mathcal{A}_\theta^{[N]}|\mathcal{W}_{\bar{\theta}},\mathcal{Q})\\
&\stackrel{(a)}{=}&H(\mathcal{A}_\theta^{[N]}|\mathcal{W}_{\bar{\theta}},\mathcal{Q})-H(\mathcal{A}_\theta^{[N]}|\mathcal{W}_{[M]},\mathcal{Q})\\
&=&I(\mathcal{A}_\theta^{[N]};\mathcal{W}_{\theta}|\mathcal{W}_{\bar{\theta}},\mathcal{Q})\\
&=&H(\mathcal{W}_{\theta}|\mathcal{W}_{\bar{\theta}},\mathcal{Q})-H(\mathcal{W}_{\theta}|\mathcal{A}_\theta^{[N]},\mathcal{W}_{\bar{\theta}},\mathcal{Q})\\
&\stackrel{(b)}{=}&H(\mathcal{W}_{\theta}|\mathcal{W}_{\bar{\theta}},\mathcal{Q})\\
&\stackrel{(c)}{=}&L
\end{eqnarray*}
where {\small$(a)$} is due to (\ref{eq1}), {\small$(b)$} comes from the correctness condition (\ref{eq2}), and {\small$(c)$} is from the assumptions (\ref{q1}) and (\ref{eq0}).
\end{proof}

\begin{proposition}
\label{lem6} For a linear capacity-achieving PIR scheme, for any $\theta\in[M]$ and any $\Gamma\subseteq[N]$ with $|\Gamma|=T$,
it holds
\begin{equation}\label{eq18}
{\rm rank}(Q^{\Gamma}_{\theta,\theta})=\frac{TL}{N},
\end{equation}
\begin{equation}\label{eq19}
{\rm rank}(Q^{\Gamma}_{\theta,\bar{\theta}})=D-L.
\end{equation}
\end{proposition}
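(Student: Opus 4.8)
The plan is to reduce both rank equalities to entropy statements through Lemma~\ref{lem4}, and then to finish by invoking the equality form of Lemma~\ref{lem3} recorded in Proposition~\ref{pro2}, together with Propositions~\ref{pro1} and~\ref{pro4}. Throughout write $\bar\theta=[M]-\theta$, which is nonempty since $M\geq2$, so the nonemptiness hypothesis of Lemma~\ref{lem4} is met.

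For the identity~(\ref{eq18}), I would apply Lemma~\ref{lem4} with the conditioning set $\Lambda=\bar\theta$, for which $[M]-\Lambda=\{\theta\}$, giving ${\rm rank}(Q^{\Gamma}_{\theta,\theta})=H(\mathcal{A}_\theta^{\Gamma}\mid\mathcal{W}_{\bar\theta},\mathcal{Q})$. Since $\theta\notin\bar\theta$, the second case of Proposition~\ref{pro2} yields $H(\mathcal{A}_\theta^{\Gamma}\mid\mathcal{W}_{\bar\theta},\mathcal{Q})=\tfrac{T}{N}H(\mathcal{A}_\theta^{[N]}\mid\mathcal{W}_{\bar\theta},\mathcal{Q})$. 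Finally, using Lemma~\ref{lem4} once more, now with $\Gamma=[N]$ and $\Lambda=\bar\theta$, and then Proposition~\ref{pro4}, one gets $H(\mathcal{A}_\theta^{[N]}\mid\mathcal{W}_{\bar\theta},\mathcal{Q})={\rm rank}(Q^{[N]}_{\theta,\theta})=L$. Combining these gives ${\rm rank}(Q^{\Gamma}_{\theta,\theta})=TL/N$.

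For the identity~(\ref{eq19}), I would instead take $\Lambda=\{\theta\}$ in Lemma~\ref{lem4}, so that $[M]-\Lambda=\bar\theta$ and ${\rm rank}(Q^{\Gamma}_{\theta,\bar\theta})=H(\mathcal{A}_\theta^{\Gamma}\mid\mathcal{W}_{\theta},\mathcal{Q})$. Now $\theta\in\{\theta\}$, so the first case of Proposition~\ref{pro2} gives $H(\mathcal{A}_\theta^{\Gamma}\mid\mathcal{W}_{\theta},\mathcal{Q})=H(\mathcal{A}_\theta^{[N]}\mid\mathcal{W}_{\theta},\mathcal{Q})$, and the right-hand side equals $D-L$ by~(\ref{eq12}) in Proposition~\ref{pro1}. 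Hence ${\rm rank}(Q^{\Gamma}_{\theta,\bar\theta})=D-L$.

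I do not expect a genuine obstacle: the whole argument is a matter of choosing the conditioning sets $\Lambda$ so that the entropy identities already proved can be applied verbatim. The only points demanding care are keeping track of which branch of Proposition~\ref{pro2} (the ``$\theta\in\Lambda$'' case versus the ``$\theta\notin\Lambda$'' case) is being used in each of the two parts, and checking that every invocation of Lemma~\ref{lem4} is with nonempty index sets, which is where $M\geq2$ (so $\bar\theta\neq\emptyset$) enters.
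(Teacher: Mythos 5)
Your proposal is correct and follows essentially the same route as the paper: both identities are translated to conditional entropies via Lemma~\ref{lem4}, then resolved using the appropriate branch of Proposition~\ref{pro2} together with Proposition~\ref{pro4} for~(\ref{eq18}) and identity~(\ref{eq12}) of Proposition~\ref{pro1} for~(\ref{eq19}). No gaps; the care you note about which branch of Proposition~\ref{pro2} applies is exactly how the paper's own argument proceeds.
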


\begin{proof} From Lemma \ref{lem4}, we have
\begin{eqnarray*}
{\rm rank}(Q^{\Gamma}_{\theta,\theta})&=&H(\mathcal{A}_\theta^{\Gamma}|\mathcal{W}_{\bar{\theta}},\mathcal{Q})\\
&\stackrel{(a)}{=}&\frac{T}{N}H(\mathcal{A}_\theta^{[N]}|\mathcal{W}_{\bar{\theta}},\mathcal{Q})\\
&\stackrel{(b)}{=}&\frac{T}{N}{\rm rank}(Q^{[N]}_{\theta,\theta})\\&\stackrel{(c)}{=}&\frac{TL}{N}
\end{eqnarray*}
where {\small$(a)$} is from Proposition \ref{pro2}, {\small$(b)$} is due to Lemma \ref{lem4} and {\small$(c)$} comes from Proposition \ref{pro4}.
Similarly, by successively using Lemma \ref{lem4}, Proposition \ref{pro2} and (\ref{eq12})  we have
\begin{eqnarray*}
{\rm rank}(Q^{\Gamma}_{\theta,\bar{\theta}})&=&H(\mathcal{A}_\theta^{\Gamma}|\mathcal{W}_{\theta},\mathcal{Q})\\
&=&H(\mathcal{A}_\theta^{[N]}|\mathcal{W}_{\theta},\mathcal{Q})\\
&=&D-L .
\end{eqnarray*}
\end{proof}

\subsection{Proof of Theorem \ref{thm2}}\label{secc}
Some basic lemmas are stated below before proving Theorem \ref{thm2}. Proofs of these lemmas can be found in Appendix \ref{appenB}.

\begin{lemma}\label{lem5}
Let $a,b,m\in\mathbb{N}$. Suppose $d_1={\rm gcd}(a,b),d_2={\rm gcd}(a^m,\sum^{m}_{i=0}a^{m-i}b^i)$, then $d_2=d_1^m$.
\end{lemma}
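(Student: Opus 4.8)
The plan is to reduce everything to the coprime case by factoring out the gcd. First I would write $a = d_1 a'$ and $b = d_1 b'$ with ${\rm gcd}(a',b')=1$. Then $a^m = d_1^m (a')^m$, and since $a^{m-i}b^i = d_1^m (a')^{m-i}(b')^i$ for every $i$, the sum becomes $\sum_{i=0}^m a^{m-i}b^i = d_1^m\sum_{i=0}^m (a')^{m-i}(b')^i$. Using $\mathrm{gcd}(kx,ky)=k\,\mathrm{gcd}(x,y)$ to pull the common factor $d_1^m$ out of the gcd yields $d_2 = d_1^m\cdot{\rm gcd}\bigl((a')^m,\,\sum_{i=0}^m (a')^{m-i}(b')^i\bigr)$, so it remains only to show that this last gcd equals $1$.

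For that, set $S = \sum_{i=0}^m (a')^{m-i}(b')^i$. It suffices to show that no prime $p$ divides both $(a')^m$ and $S$, i.e., that no prime dividing $a'$ divides $S$. Suppose $p\mid a'$. Every term of $S$ other than the last one, $(b')^m$, contains a positive power of $a'$, hence is divisible by $p$; therefore $S\equiv (b')^m\pmod p$. Since $p\mid a'$ and ${\rm gcd}(a',b')=1$, we get $p\nmid b'$, so $p\nmid (b')^m$ and thus $p\nmid S$. Consequently ${\rm gcd}((a')^m,S)=1$, and $d_2 = d_1^m$ as claimed.

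The argument is completely elementary, so I do not anticipate a genuine obstacle; the only mild subtlety is to make sure the reduction step is valid in degenerate situations (for instance when $a$ or $b$ equals $0$, or when $m=0$), but in each such case the factorization $a=d_1a'$, $b=d_1 b'$ with coprime $a',b'$ and the identity for the sum still hold, so the conclusion follows directly.
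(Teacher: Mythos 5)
Your proof is correct and follows essentially the same route as the paper's: factor out $d_1$ to reduce to coprime $a'$, $b'$, then observe that any prime dividing both $(a')^m$ and the sum would have to divide both $a'$ and $b'$, a contradiction. The paper's version is just a terser phrasing of the same argument.
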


\begin{lemma}
\label{thm3}Let $V_1,...,V_N$ be $N$ linear subspaces of $\mathbb{F}^{L}$ and denote $V=\sum^N_{i=1}V_i$. Suppose $T\in\mathbb{N}$ and $1\leq T<N$. If for any subset $\Gamma\subseteq[N]$ with $|\Gamma|=T$ it holds $\dim(\sum_{i\in\Gamma}V_i)=\frac{T}{N}\dim V$, then
$N|\dim V$ and $\dim(V_i)=\frac{1}{N}\dim V$ for $1\leq i\leq N$. Moreover ,
 \begin{equation}\label{eq23}
V=\oplus^N_{i=1}V_i,
\end{equation}
\end{lemma}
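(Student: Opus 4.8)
The plan is to package the hypothesis into a statement about the set function $f(S)=\dim\!\left(\sum_{i\in S}V_i\right)$ on subsets $S\subseteq[N]$, and to exploit that it constrains $f$ on \emph{every} $T$-subset, which is a strong averaging-type condition. Write $v=\dim V=f([N])$. The function $f$ is normalized ($f(\emptyset)=0$), monotone, and submodular, since $\sum_{S\cup S'}V_i=\sum_S V_i+\sum_{S'}V_i$ while $\sum_{S\cap S'}V_i\subseteq\left(\sum_S V_i\right)\cap\left(\sum_{S'}V_i\right)$, so $f(S)+f(S')\ge f(S\cup S')+f(S\cap S')$. The goal is to show $f$ is ``uniform'', i.e. $f(S)=\frac{|S|}{N}v$ for every $S$; from this the conclusions fall out quickly, as explained below.

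The main step is to pass to the size-averaged function $\phi(s)=\binom{N}{s}^{-1}\sum_{|S|=s}f(S)$ for $s=0,1,\dots,N$ and prove it is \emph{discretely concave}: $2\phi(s)\ge\phi(s-1)+\phi(s+1)$ for $1\le s\le N-1$. I would obtain this by summing the submodularity inequality $f(A\cup\{x\})+f(A\cup\{y\})\ge f(A\cup\{x,y\})+f(A)$ over all choices of an $(s-1)$-set $A$ together with a pair $x\ne y$ outside $A$, and then counting multiplicities: each $s$-set arises $s(N-s)$ times on the left, each $(s+1)$-set $\binom{s+1}{2}$ times and each $(s-1)$-set $\binom{N-s+1}{2}$ times on the right, and a short binomial identity gives $\binom{s+1}{2}\binom{N}{s+1}=\binom{N-s+1}{2}\binom{N}{s-1}=\frac{1}{2}s(N-s)\binom{N}{s}$, so after dividing, the averaged inequality is exactly $2\phi(s)\ge\phi(s-1)+\phi(s+1)$. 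This bookkeeping is the part I expect to be the real obstacle: a careless summation only yields something weaker than concavity, so the multiplicities must be tracked precisely.

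Granting concavity, note $\phi(0)=0$, $\phi(N)=v$, and $\phi(T)=\frac{T}{N}v$ by hypothesis, so these three points lie on the line $y=\frac{v}{N}x$ with $0<T<N$. Writing increments $\delta_s=\phi(s)-\phi(s-1)$, concavity means $\delta_1\ge\cdots\ge\delta_N$, while the data force $\frac{1}{T}\sum_{s\le T}\delta_s=\frac{1}{N-T}\sum_{s>T}\delta_s=\frac{v}{N}$; a non-increasing sequence whose first $T$ terms average $\frac{v}{N}$ and whose last $N-T$ terms also average $\frac{v}{N}$ must be constant equal to $\frac{v}{N}$, so $\phi(s)=\frac{v}{N}s$ for all $s$. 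In particular $\phi(1)=\frac{v}{N}$, i.e. $\sum_{i=1}^N\dim V_i=v=\dim\!\left(\sum_i V_i\right)$; since $\dim\!\left(\sum_i V_i\right)\le\sum_i\dim V_i$ always, with equality precisely when the sum is direct, we get $V=\bigoplus_{i=1}^N V_i$. Finally, directness gives $\dim\!\left(\sum_{i\in\Gamma}V_i\right)=\sum_{i\in\Gamma}\dim V_i$ for every $\Gamma$, so the hypothesis reads $\sum_{i\in\Gamma}\dim V_i=\frac{T}{N}v$ for all $|\Gamma|=T$; comparing two $T$-sets differing in a single index (available because $1\le T\le N-1$) forces all $\dim V_i$ equal, hence each equals $\frac{1}{N}\sum_i\dim V_i=\frac{v}{N}$, which in turn gives $N\mid v$. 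Everything after the concavity lemma is soft.
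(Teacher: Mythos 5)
Your proof is correct, and it takes a genuinely different route from the paper's. The paper argues directly on dimensions: it first notes some $V_{i_0}$ has $\dim V_{i_0}\geq\lceil\frac{\dim V}{N}\rceil$, then proves by downward induction on $s\leq T$ (via a contradiction/exchange argument that adjoins a well-chosen index $j_0$ outside a larger set) the bound $\dim\sum_{i\in\Gamma'}V_i\leq\frac{T}{N}\dim V-(T-s)\lceil\frac{\dim V}{N}\rceil$ for all $s$-sets $\Gamma'$, and finally sets $s=1$ to squeeze $\dim V_i=\frac{1}{N}\sum_j\dim V_j$ for every $i$, from which directness, equal dimensions and $N\mid\dim V$ all follow at once. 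You instead exploit submodularity of $f(S)=\dim\bigl(\sum_{i\in S}V_i\bigr)$ (which follows from $\dim A+\dim B=\dim(A+B)+\dim(A\cap B)$ with $\sum_{S\cap S'}V_i\subseteq(\sum_S V_i)\cap(\sum_{S'}V_i)$), average over all $s$-subsets, and show the averaged profile $\phi$ is discretely concave; your multiplicity count is right ($s(N-s)$ per $s$-set on the left, $\binom{s+1}{2}$ per $(s+1)$-set and $\binom{N-s+1}{2}$ per $(s-1)$-set on the right, and the stated binomial identity checks out), so the three collinear values $\phi(0)=0$, $\phi(T)=\frac{T}{N}v$, $\phi(N)=v$ force $\phi$ linear, giving $\sum_i\dim V_i=\dim V$, hence directness, and the $T$-set exchange then equalizes the individual dimensions (valid since $1\leq T\leq N-1$). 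What each approach buys: yours is more conceptual, needs no ceiling bookkeeping, and yields the stronger conclusion $\phi(s)=\frac{s}{N}\dim V$ for every $s$ as a byproduct; the paper's is elementary and self-contained, works only with subsets of size at most $T$, and avoids the combinatorial double-counting, at the cost of a more ad hoc induction. One presentational note: you should state explicitly that equality of the two block averages of a non-increasing increment sequence forces $\delta_T=\delta_{T+1}=\frac{v}{N}$ and hence constancy on each block — it is exactly the one-line argument you sketch, but it is the hinge of the "soft" part.
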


Next we prove Theorem \ref{thm2}.
\begin{proof}We finish the proof in three steps:

{\it(1)  Prove $L$ and $D$ have specific forms, i.e.,
$L=\mu n^{M-1}$ and $D=\mu \sum^{M-1}_{i=0}n^{M-1-i}t^i$ for some $\mu\in\mathbb{N}$.}

For capacity-achieving schemes, we have
\begin{align}\label{eqrate}
\mathcal{R}=\frac{L}{D}&=\frac{1}{1+\frac{t}{n}+\dots+\frac{t^{M-1}}{n^{M-1}}}=\frac{n^{M-1}}{\sum^{M-1}_{i=0}n^{M-1-i}t^i}.
\end{align}
Based on the connection stated in Remark \ref{re2}  both $L$ and $D$ are integers, thus (\ref{eqrate}) implies $n^{M-1}|L\sum^{M-1}_{i=0}n^{M-1-i}t^i$. Note that ${\rm gcd}(n,t)=1$, so by Lemma \ref{lem5} it has
${\rm gcd}(n^{M-1},\sum^{M-1}_{i=0}n^{M-1-i}t^i)=1$. As a result, we have $n^{M-1}|L$. That is, there exists an integer $\mu$ such that $L=\mu n^{M-1}$. Combining with (\ref{eqrate}) it follows $D=\mu \sum^{M-1}_{i=0}n^{M-1-i}t^i$.

\vspace{6pt}

{\it(2)  Prove $N\mid L$.}

For any $i\in [N]$, denote $V_i=\textsl{Cspan}(Q^{(i)}_{\theta,\theta})$ where $\textsl{Cspan}(\cdot)$ denotes the linear space spanned by all columns of the matrix. Thus from Proposition \ref{pro4} it has ${\rm dim}(\sum_{i=1}^NV_i)={\rm rank}(Q^{[N]}_{\theta,\theta})=L$, and from (\ref{eq18}) it has ${\rm dim}(\sum_{i\in\Gamma}V_i)={\rm rank}(Q^{\Gamma}_{\theta,\theta})=\frac{T}{N}L$ for any $\Gamma\subseteq[N]$ with $|\Gamma|=T$. Therefore, by Lemma \ref{thm3} it follows $N\mid L$.

Moreover, it has $V=\oplus_{i=1}^NV_i$ which implies ${\rm rank} (Q^{\Gamma}_{\theta,\theta})=\sum_{i\in\Gamma}{\rm rank} (Q^{i}_{\theta,\theta})$. By Lemma \ref{lem4} we have
 \begin{equation}\label{eq24}
 H(\mathcal{A}^{\Gamma}_\theta|\mathcal{W}_{\bar{\theta}},\mathcal{Q})
 =\sum_{i\in\Gamma}H(\mathcal{A}^{(i)}_\theta|\mathcal{W}_{\bar{\theta}},\mathcal{Q})\;.
\end{equation}

\vspace{6pt}

{\it(3) Prove $T\mid D-L$}

For any $\theta'\neq\theta$, it can be seen that
\begin{eqnarray}
{\rm rank}(Q^{\Gamma}_{\theta,\theta'})&\stackrel{(a)}{=}& H(\mathcal{A}^{\Gamma}_\theta|\mathcal{W}_{\overline{\theta'}},\mathcal{Q})\notag\\
&\stackrel{(b)}{=}&H(\mathcal{A}^{\Gamma}_{\theta'}|\mathcal{W}_{\overline{\theta'}},\mathcal{Q})\notag\\
 &\stackrel{(c)}{=}&\sum_{i\in\Gamma}H(\mathcal{A}^{(i)}_{\theta'}|\mathcal{W}_{\overline{\theta'}},\mathcal{Q})\notag\\
 &\stackrel{(d)}{=}&\sum_{i\in\Gamma}H(\mathcal{A}^{(i)}_\theta|\mathcal{W}_{\overline{\theta'}},\mathcal{Q})\notag\\
 &\stackrel{(e)}{=}&\sum_{i\in\Gamma}{\rm rank} (Q^{(i)}_{\theta,\theta'})\label{eq25}
\end{eqnarray}
where {\small(a)} and {\small (e)} hold because of Lemma \ref{lem4}, {\small (b)} and {\small (d)} is due to Lemma \ref{lem2}, and {\small (c)} comes from (\ref{eq24}).

Because (\ref{eq25}) holds for all $\theta'\in\bar{\theta}$, it follows ${\rm rank}(Q^{\Gamma}_{\theta,\bar{\theta}})=\sum_{i\in\Gamma}{\rm rank} (Q^{(i)}_{\theta,\bar{\theta}})$. From (\ref{eq19}) we know ${\rm rank}(Q^{\Gamma}_{\theta,\bar{\theta}})=D-L$. Therefore, for any set $\Gamma\subseteq[N]$ with $|\Gamma|=T$, it holds $\sum_{i\in\Gamma}{\rm rank} (Q^{(i)}_{\theta,\bar{\theta}})=D-L$. It can be derived that ${\rm rank} (Q^{(i)}_{\theta,\bar{\theta}})={\rm rank} (Q^{(j)}_{\theta,\bar{\theta}})$ for any $i,j\in[N]$ and $T\mid D-L$.

\vspace{6pt}

Finally, from $N\mid L$ and $L=\mu n^{M-1}$ we have $d\mid \mu n^{M-2}$. Similarly, from $T\mid D-L$ and $D-L=\mu \sum^{M-1}_{i=1}n^{M-1-i}t^i$ we have $d\mid \mu \sum^{M-2}_{i=0}n^{M-2-i}t^i$. Therefore,
$$d\mid{\rm gcd}(\mu n^{M-2},\mu \sum^{M-2}_{i=0}n^{M-2-i}t^i)\;.$$
However, because ${\rm gcd}(n,t)=1$ we know from Lemma \ref{lem5} that ${\rm gcd}(n^{M-2},\sum^{M-2}_{i=0}n^{M-2-i}t^i)=1$ . Hence we have $d\mid \mu$ and thus $\mu\geq d$. As a result, $L=\mu n^{M-1}\geq dn^{M-1}$.
\end{proof}

\section{The PIR scheme with $L=dn^{M-1}$}
In this section, for $M\geq 2$ and $1\leq T<N$, we present a linear capacity-achieving PIR scheme with sub-packetization $L=dn^{M-1}$.
\subsection{Examples}
We begin with a simple example.
\begin{example}\label{eg1} Suppose $M=2$, $N=3$ and $T=2$. In this case the sub-packetization of our scheme is $L=d(\frac{N}{d})^{M-1}=3$, so we regard each record as a $3$-dimensional vector over a finite field $\mathbb{F}_{q}$, i.e., $W_1,W_2\in\mathbb{F}_{q}^{~3}$.\footnote{For records of large size, they can be divided into stripes in $\mathbb{F}_{q}^{~3}$, i.e., $W_i\in\mathbb{F}_{q^\ell}^{~3}$ for $\ell$ stripes. The scheme works identically on all the stripes in parallel. } WLOG, suppose the user wants $W_1$. The scheme is described in two parts:

{\it Part $1$: Mixing-Expanding.}

Let $S_1,S_2\in\mathbb{F}_q^{3\times3}$ be two matrices chosen by the user independently and uniformly from all $3\times3$ invertible matrices over $\mathbb{F}_q$. Actually, $S_1$ and $S_2$ are the random resources privately held by the user. Let $G_{[3,2]}\in\mathbb{F}_{q}^{2\times3}$ be a generator matrix of a $[3,2]$ MDS code over $\mathbb{F}_q$. Then, define
   \begin{align}
   &(a_1,a_2,a_3)=W_1S_1 \nonumber\\
   &(b_1,b_2,b_3)=W_2S_2[:,(1:2)]G_{[3,2]}\label{eqb}
   \end{align}
 where $S_2[:,(1:2)]$  denotes the $3\times 2$ matrix formed by the first $2$ columns of $S_2$.
 Note that $W_1S_1$ induces an invertible transformation of $W_1$ and we call this a {\it mixing} process. Similarly, $W_2S_2[:,(1:2)]$ generates a mixing vector from $W_2$ with only $2$ coordinates drawn from $3$ coordinates. Then by multiplying $G_{[3,2]}$ it expands a $2$-dimensional vector into a $3$-dimensional vector. The MDS property ensures that any $2$ coordinates of $(b_1,b_2,b_3)$ suffice to recover the entire vector.

 {\it Part $2$: Combining.}

 Since the $a_i$'s contain information of the desired record $W_1$, we call them {\it desired symbols}. Correspondingly, the $b_i$'s are called {\it undesired symbols}. For a combination $a_i+b_j$ we call it a {\it mixed symbol}. At queries of the user, the answers given by all the servers are composed of the three kinds of symbols. Specifically, the answers are formed by iteratively applying the following  two steps:
 \begin{itemize}
 \item[(a1)]\emph{Enforcing record symmetry within the query to each server.}
 \item[(a2)] \emph{Combining undesired symbols with new desired symbols.}
 \end{itemize}
 We  explain these two steps through the example. As displayed in Fig.\ref{fg1}, the user first asks for $a_1$ and $a_2$ respectively from $\rm{Serv}^{(1)}$ and $\rm{Serv}^{(2)}$. These are both desired symbols. Then applying Step (a1), he simultaneously asks for $b_1$ and $b_2$ to enforce symmetry with respect to both  records within the query to each server. Note that $b_1$  and $b_2$ are undesired symbols. A high-rate PIR scheme will try to use these undesired symbols decode more desired symbols. Thus applying Step (a2) the user also asks for $a_3+b_3$ from $\rm{Serv}^{(3)}$.

 \begin{figure}[ht]
\centering
\begin{tikzpicture}[scale=2]

\node at (-0.3,0){\footnotesize
\begin{tabular}{|c|c|c|}
\hline $\rm{Serv}^{(1)}$ & $\rm{Serv}^{(2)}$ & $\rm{Serv}^{(3)}$\\\hline
$a_1$&$a_2$&\\\hline
\end{tabular}};
\draw[thick,->,>=stealth] (0.78,0) -- (1.18,0);
\node at (0.95,0.1){\scriptsize(a1)};
\draw[thick,->,>=stealth] (1.18,-0.1) -- (0.83,-0.32);
\node at (0.96,-0.18){\scriptsize(a2)};

\node at (2.25,0){\footnotesize\begin{tabular}{|c|c|c|}
\hline$\rm{Serv}^{(1)}$ & $\rm{Serv}^{(2)}$ & $\rm{Serv}^{(3)}$\\\hline
$a_1,b_1$&$a_2,b_2$&\\\hline
\end{tabular}};

\node at (1.05,-0.62){\footnotesize\begin{tabular}{|c|c|c|}
\hline$\rm{Serv}^{(1)}$ & $\rm{Serv}^{(2)}$ & $\rm{Serv}^{(3)}$\\\hline
$a_1,b_1$&$a_2,b_2$& \\&&$a_3+b_3$\\\hline
\end{tabular}};
\end{tikzpicture}
\caption{Query sequence in the $(M=2,N=3,T=2)$ PIR scheme.}
\label{fg1}
\end{figure}
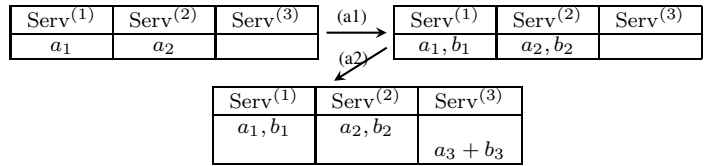

Now we show the scheme satisfies the correctness condition and the privacy condition. Recall that $(b_1,b_2,b_3)$ is a codeword in a $[3,2]$ MDS code. Thus the desired symbol $a_3$ can be decoded from the mixed symbol $a_3+b_3$ by using $b_1$ and $b_2$. The correctness condition follows immediately.

As to the privacy condition, we need to show that for any up to $T=2$ colluding servers, their query sequence for retrieving $W_1$, which is actually a random variable induced by the random matrices $S_1$ and $S_2$, has the same distribution as the query sequence for retrieving $W_2$. Actually, if the user wants $W_2$, the scheme goes the same way except for exchanging the generation processes of $a_i$'s and $b_i$'s in (\ref{eqb}), i.e.,
\begin{align}
   &(a_1,a_2,a_3)=W_1S'_1[:,(1:2)]G_{[3,2]} \nonumber\\
   &(b_1,b_2,b_3)=W_2S'_2.\label{eqa}
   \end{align}
For any two colluding servers, say ${\rm Serv}^{(1)}$ and ${\rm Serv}^{(3)}$, their query sequence is $\{a_1, b_1, a_3+b_3\}$. For every choice of $S_1$ and $S_2$ in (\ref{eqb}), one can correspondingly find $S'_1$ and $S'_2$ in (\ref{eqa}) such that they produce the same  sequence $\{a_1, a_3, b_1, b_3\}$, i.e.,
\begin{align}\label{eqab}(a_1,a_3)=W_1S_1[:,(1,3)]=W_1S'_1[:,(1:2)]G_{[3,2]}[:,(1,3)]\nonumber\\
(b_1,b_3)=W_2S_2[:,(1:2)]G_{[3,2]}[:,(1,3)]=W_2S'_2[:,(1,3)]\end{align}
where $G_{[3,2]}[:,(1,3)]$ is a $2\times2$ matrix formed by the first and the third columns of $G_{[3,2]}$. Because $G_{[3,2]}[:,(1,3)]$ is an invertible matrix, the one-to-one correspondence between $S_i$ and $S_i'$ satisfying (\ref{eqab}) can be easily established, implying the identical distribution of $\{a_1, b_1, a_3+b_3\}$ for retrieving $W_1$ and for
retrieving $W_2$. Similar observations hold for all colluding subset containing at most $2$ servers, thus the privacy condition follows.

Finally, it is easy to see the desired record consists of $3$ symbols while the answers totally contain $5$ symbols, so the scheme has rate $\frac{3}{5}$ attaining the
capacity for this case.
\end{example}

Next we give some remarks to reveal more design ideas behind the scheme construction.

\begin{remark}
In the mixing-expanding part, the desired record goes through an invertible transformation and generates a symbol vector consisting of $L$ independent symbols, while each of the undesired records abandons a fraction of symbols and then generates an $L$-dimensional symbol vector with redundancy through MDS codes.
\end{remark}

\begin{remark}
In the combining part, the answers provided by each server are generated by multiplying the record vector by  a {\it combining matrix} which is fixed for given values of $M,N,T$, and independent of the index of the desired record.

As in Example \ref{eg1}, the answers of  the first server ${\rm Serv}^{(1)}$ is $(a_1,b_1)$. Obviously,
\begin{equation}\label{eqc}(a_1,b_1)=(a_{[1:3]},b_{[1:3]})\begin{pmatrix}{\bf e}_1^\tau&\bf 0\\\bf 0&{\bf e}_1^\tau\end{pmatrix}\end{equation} where ${\bf e}_1=(1,0,0)$. Thus the combining matrix for ${\rm Serv}^{(1)}$ is ${\footnotesize \begin{pmatrix}{\bf e}_1^\tau&\bf 0\\\bf 0&{\bf e}_1^\tau\end{pmatrix}}$. Similarly, one can write down the combining matrices for ${\rm Serv}^{(2)}$ and ${\rm Serv}^{(3)}$ from the tables in Fig. \ref{fg1}.

Furthermore, by (\ref{eqb}) and (\ref{eqc}) the answers of ${\rm Serv}^{(1)}$ for retrieving $W_1$ can be written as
{\small\begin{eqnarray*}(a_1,b_1)&=&(W_1,W_2)\begin{pmatrix}S_1&\\&S_2[:,(1:2)]G_{[3,2]}\end{pmatrix}\begin{pmatrix}{\bf e}_1^\tau&\bf 0\\\bf 0&{\bf e}_1^\tau\end{pmatrix}\\&\triangleq&(W_1,W_2)Q\end{eqnarray*}}
where $Q\in\mathbb{F}_q^{6\times 2}$ is actually the query sent from the user to ${\rm Serv}^{(1)}$ for retrieving $W_1$. It can be seen that $Q$ is a random variable induced by the random resources $S_1$ and $S_2$, and in (\ref{eqab}) we actually proves $Q$ has the identical distribution no matter which record is desired. In our scheme description, we usually omit the complete form of query matrices as they can be directly obtained by putting the mixing-expanding part and the combining part together.

\end{remark}


Let us give one more example to help further understand the scheme construction.
\begin{example}\label{eg2}Suppose $M=3,N=3,T=2$. In this case, the sub-packetization of our scheme is $L=d(\frac{N}{d})^{M-1}=9$, so the records are denoted by $W_1,W_2,W_3\in\mathbb{F}_{q}^{9}$. WLOG, suppose the user wants $W_1$.

In the {\it mixing-expanding} part, the user privately chooses matrices $S_1,S_2,S_3$ randomly and uniformly from all $9\times9$ invertible matrices over $\mathbb{F}_{q}$. Let $G_{[n,k]}\in\mathbb{F}_{q}^{k\times n}$ denote a generator matrix of an $[n,k]$ MDS code which is publicly kown. Then, define
\begin{align*}
   &a_{[1:9]}=W_1S_1\\
   &b_{[1:9]}=W_2S_2[:,{\tiny(1:6)}]G\\
   &c_{[1:9]}=W_3S_3[:,(1:6)]G
\end{align*}
where $G={\footnotesize\begin{pmatrix}
G_{[6,4]}&0\\0&G_{[3,2]}
\end{pmatrix}}$. Actually, the $6$ symbols drawn from $W_2$, i.e., $W_2S_2[:,(1:6)]$, by multiplying $G$ are expanded in two parts separately, that is, the first $4$ symbols are expanded into $6$ symbols by multiplying $G_{[6,4]}$ while the latter $2$ symbols are expanded into $3$ symbols by multiplying $G_{[3,2]}$. The same expanding process goes with $W_3$.

In the {\it Combining} part, the generating process of  queries to all serves by iteratively applying Step (a1) and (a2) is displayed in Fig.\ref{fg2}.
\begin{figure}[ht]
\centering
\begin{tikzpicture}[scale=2]

\node at (-0.2,0){\footnotesize
\begin{tabular}{|c|c|c|}
\hline $\rm{Serv}^{(1)}$ & $\rm{Serv}^{(2)}$ & $\rm{Serv}^{(3)}$\\\hline
$a_1$&$a_2$&$a_3,a_4$\\\hline
\end{tabular}};

\node at (2.3,-0.15){\footnotesize\begin{tabular}{|c|c|c|}
\hline$\rm{Serv}^{(1)}$ & $\rm{Serv}^{(2)}$ & $\rm{Serv}^{(3)}$\\\hline
\makecell{$a_1$\\$b_1$\\$c_1$}&\makecell{$a_2$\\$b_2$\\$c_2$}&\makecell{$a_3,a_4$\\$b_3,b_4$\\$c_3,c_4$}\\
\hline
\end{tabular}};

\node at (-0.2,-1.05){\footnotesize\begin{tabular}{|c|c|c|}
\hline$\rm{Serv}^{(1)}$ & $\rm{Serv}^{(2)}$ & $\rm{Serv}^{(3)}$\\\hline
\makecell{$a_1$\\$b_1$\\$c_1$}&\makecell{$a_2$\\$b_2$\\$c_2$}&\makecell{$a_3,a_4$\\$b_3,b_4$\\$c_3,c_4$}\\$a_5+b_5$&$a_6+b_6$&\\
$a_7+c_5$&$a_8+c_6$&\\
\hline
\end{tabular}};

\node at (2.3,-1.2){\footnotesize\begin{tabular}{|c|c|c|}
\hline$\rm{Serv}^{(1)}$ & $\rm{Serv}^{(2)}$ & $\rm{Serv}^{(3)}$\\\hline
\makecell{$a_1$\\$b_1$\\$c_1$}&\makecell{$a_2$\\$b_2$\\$c_2$}&\makecell{$a_3,a_4$\\$b_3,b_4$\\$c_3,c_4$}\\$a_5+b_5$&$a_6+b_6$&\\
$a_7+c_5$&$a_8+c_6$&\\$b_7+c_7$&$b_8+c_8$&\\
\hline
\end{tabular}};

\node at (1.0,-2.55){\footnotesize\begin{tabular}{|c|c|c|}
\hline$\rm{Serv}^{(1)}$ & $\rm{Serv}^{(2)}$ & $\rm{Serv}^{(3)}$\\\hline
\makecell{$a_1$\\$b_1$\\$c_1$}&\makecell{$a_2$\\$b_2$\\$c_2$}&\makecell{$a_3,a_4$\\$b_3,b_4$\\$c_3,c_4$}\\$a_5+b_5$&$a_6+b_6$&\\
$a_7+c_5$&$a_8+c_6$&\\$b_7+c_7$&$b_8+c_8$&\\&&$a_9+b_9+c_9$\\
\hline
\end{tabular}};

\draw[thick,->,>=stealth] (0.85,0) -- (1.25,0);
\draw[thick,->,>=stealth] (1.25,-0.2) -- (0.9,-0.5);
\draw[thick,->,>=stealth] (0.85,-1.15) -- (1.25,-1.15);
\draw[thick,->,>=stealth] (1.25,-1.5) -- (0.9,-1.85);

\node at (1.05,0.1){\scriptsize(a1)};
\node at (1.05,-0.3){\scriptsize(a2)};
\node at (1.05,-1.05){\scriptsize(a1)};
\node at (1.05,-1.62){\scriptsize(a2)};

\end{tikzpicture}
\caption{Query sequence in the $(M=3,N=3,T=2)$ PIR scheme.}
\label{fg2}
\end{figure}
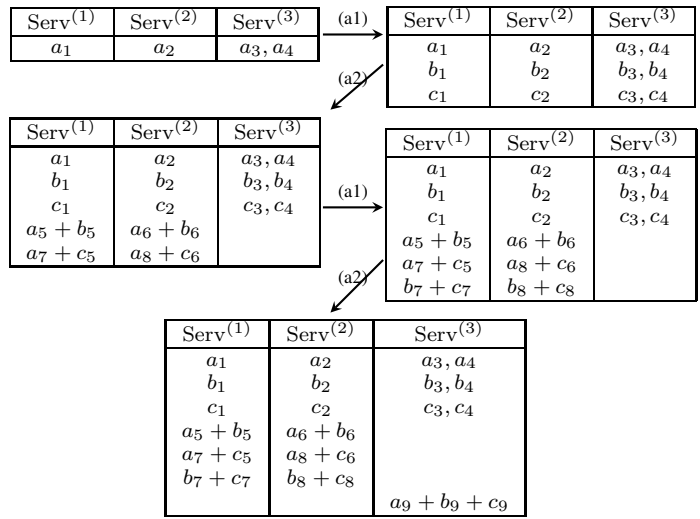

The correctness condition holds obviously. Note that $b_7+c_7, b_8+c_8$ are undesired symbols and they produce $b_9+c_9$ in the expanding process through a $[3,2]$ MDS code. Therefore, the user can obtain $a_9$ from $a_9+b_9+c_9$. The privacy condition can be verified similarly as in Example \ref{eg1}. The rate of this scheme is $\frac{9}{19}$ which matches the capacity for this case.
\end{example}

The above two examples provide an overview of the scheme construction, leaving some details to be clarified, such as
\begin{itemize}
  \item[(1)] How many desired symbols are queried directly from each server in the combing part?
  \item[(2)] What are the parameters for the MDS codes used in the mixing-expanding part?
\end{itemize}
In the next sections we will fill in the details.

\subsection{Formal description of the general scheme}\label{sec3b}
Our scheme can be seemed as an extension of the scheme in \cite{Sun&Jafar16:OptimalPIR} from the case of $T=1$ to the case of general $T$. As pointed out in \cite{Sun&Jafar16:OptimalPIR}, a key idea for reducing sub-parcketization is to eliminate the symmetry across servers that was required in \cite{Sun&Jafar16:CapacityPIR,Sun&Jafar16:ColludPIR}. For example, in Fig. \ref{fg1} one can see that two symbols are queried from ${\rm Serv}^{(1)}$ and ${\rm Serv}^{(2)}$ each while only one symbol is queried from ${\rm Serv}^{(3)}$.
But for the requirement of symmetry across servers, it is supposed to query the same number of symbols from each server. However, designing for each server a special query sequence makes the scheme too complicated. It turns out dividing the servers into two disjoint groups and enforcing symmetry across the servers within each group suffice to reduce sub-parcketization and to achieve capacity simultaneously. In our scheme, the first $T$ servers are in one group and the remaining $N-T$ servers are in the other. Actually, one can divide the servers arbitrarily as long as one group contains $T$ servers and the other contains the rest servers.

We now start to give a general description of the scheme. As in the examples, we describe the scheme in two parts.

{\it Part $1$: Mixing-Expanding.} Denote the $M$ records  as $W_i\in\mathbb{F}_q^{L}$, $1\leq i\leq M$. Suppose the user wants $W_\theta$ for some $\theta\in[M]$. Then define
{\small\begin{eqnarray}
&& U_\theta=(u_{\theta,1},...,u_{\theta,L})=W_\theta S_\theta\label{eqinvert}\\
&& U_i=(u_{i,1},...,u_{i,L})=W_iS_i[:,(1:\ell)]G,~\forall i\in[M]-\{\theta\},\nonumber
\end{eqnarray}}
where $S_1,...,S_M$ are the random matrices chosen by the user uniformly from all $L\times L$ invertible matrices over $\mathbb{F}_q$. Here we leave $\ell$ and the specific form of $G$ to be determined in Section \ref{secmixpart}. One needs to know that, for the desired record $W_\theta$, the symbols $u_{\theta,1},...,u_{\theta,L}$ are independent, while for each undesired record $W_i$, $i\in[M]-\{\theta\}$, the symbols $u_{i,1},...,u_{i,L}$ are expanded from $\ell$ independent symbols through  MDS encoding.

{\it Part $2$: Combining.}
For $1\leq k\leq M$ and for a subset $\Lambda\subseteq[M]$ with $|\Lambda|=k$, we call the sum $\sum_{i\in\Lambda}u_{i,j_i}$ a {\it $\Lambda$-type $k$-sum}. The queries to each server are composed of these $k$-sums. Moreover, because of applying Step (a1), the queries to each server contain the same number of $k$-sums for every type $\Lambda$ with $|\Lambda|=k$, and we denote this number as $\gamma_k^{(i)}$ for server ${\rm Serv}^{(i)}$. For example, in Example \ref{eg2} we have $\gamma_1^{(i)}=\gamma_2^{(i)}=1, \gamma_3^{(i)}=0$ for $i=1,2$ and $\gamma_1^{(3)}=2, \gamma_2^{(3)}=0, \gamma_3^{(3)}=1$. Recall that in our scheme, the servers are divided into two groups and the servers within each groups are treated symmetrically. Therefore, for $1\leq k\leq M$,
\begin{align*}
  \gamma_k^{(1)}=\gamma_k^{(2)}=\cdots=\gamma_k^{(T)}\triangleq\alpha_k\;,\\
  \gamma_k^{(T+1)}=\gamma_k^{(T+2)}=\cdots=\gamma_k^{(N)}\triangleq\beta_k\;.
\end{align*}

In order to provide a uniform arrangement of the $k$-sums in queries, we introduce a label $(k,\delta,\rho)$ for each type $\Lambda\subseteq[M]$, where $k=|\Lambda|$, and $$\left\{\begin{array}{ll}\delta=1, ~1\leq \rho \leq\binom{M-1}{k-1}&\mbox{if $\theta\in \Lambda$}\\\delta=0, ~1\leq \rho \leq\binom{M-1}{k}&\mbox{if $\theta\not\in \Lambda$}\end{array}\right.\;.$$ Actually, $\rho$ is a counting index for all subsets $\Lambda\subseteq[M]$ with the given label $(k,\delta)$.

Then in Table \ref{tab1} we give a general description of the queries in our scheme, and in Table \ref{tab2} we further explain it by combing with Example \ref{eg2} in the case of $\theta=1$. Note that in Table \ref{tab1}, $q^{(i)}_{\Lambda,h}$ denotes a $\Lambda$-type sum, i.e. $q^{(i)}_{\Lambda,h}=\sum_{\nu\in\Lambda}u_{\nu,[\cdot]}$. Here $h$ runs from $1$ to $\alpha_{|\Lambda|}$ (or $\beta_{|\Lambda|}$), as a counting index for the $\Lambda$-type sums queried from ${\rm Serv}^{(i)}$. Inside the $\Lambda$-type sum $\sum_{\nu\in\Lambda}u_{\nu,[\cdot]}$, we do not specify the second subscript of the symbols while use $[\cdot]$ instead, because these symbols are drawn successively from $U_{\nu}$. That is, for each $k$-sum that involves $U_{\nu}$ we draw a symbol from $U_{\nu}$ that has not been used in previous queries. For example, in Table \ref{tab2} the subscripts in $a_i,b_i,c_i$'s are drawn successively from $1$ to $9$.

\begin{table}[ht]
\centering
\begin{tabular}{|l|l|l|l|l|}
  \hline
  \multicolumn{3}{|c|}{label of $\Lambda,~~(k,\delta,\rho)$}& ${\rm Serv}^{(i)},~1\leq i\leq T$ & ${\rm Serv}^{(j)},~T< j\leq N$\\\hline
  \multirow{6}{*}{$k\in[M]$}&\multirow{3}{*}{$\delta=1$}&$\rho=1$&{\scriptsize$q^{(i)}_{\Lambda,1},...,q^{(i)}_{\Lambda,\alpha_k}$}
  &{\scriptsize$q^{(j)}_{\Lambda,1},...,q^{(j)}_{\Lambda,\beta_k}$}
  \\\cline{3-5}
  &&$\tiny\cdots$&$\tiny\cdots$&$\tiny\cdots$\\\cline{3-5}&&$\rho=\binom{M-1}{k-1}$ &{\scriptsize$q^{(i)}_{\Lambda,1},...,q^{(i)}_{\Lambda,\alpha_k}$}
  &{\scriptsize$q^{(j)}_{\Lambda,1},...,q^{(j)}_{\Lambda,\beta_k}$}
  \\\cline{2-5}&\multirow{3}{*}{$\delta=0$}&$\rho=1$&{\scriptsize$q^{(i)}_{\Lambda,1},...,q^{(i)}_{\Lambda,\alpha_k}$}
  &{\scriptsize$q^{(j)}_{\Lambda,1},...,q^{(j)}_{\Lambda,\beta_k}$}
  \\\cline{3-5}
  &&$\tiny\cdots$&$\tiny\cdots$&$\tiny\cdots$
  \\\cline{3-5}&&$\rho=\binom{M-1}{k}$ &{\scriptsize$q^{(i)}_{\Lambda,1},...,q^{(i)}_{\Lambda,\alpha_k}$}
  &{\scriptsize$q^{(j)}_{\Lambda,1},...,q^{(j)}_{\Lambda,\beta_k}$}\\\hline
\end{tabular}
\caption{}\label{tab1}
\end{table}
\begin{table}[ht]
\centering
\begin{tabular}{|l|l|l|l|l|l|}
  \hline
  \multicolumn{3}{|c|}{label of $\Lambda,~~(k,\delta,\rho)$}& ${\rm Serv}^{(1)}$& ${\rm Serv}^{(2)}$& ${\rm Serv}^{(3)}$\\\hline
  \multirow{3}{*}{$k=1$}&$\delta=1$&$\Lambda=\{1\}$&$a_1$&$a_2$
  &$a_3,~a_4$
  \\\cline{2-6}
  &\multirow{2}{*}{$\delta=0$}&$\Lambda=\{2\}$&$b_1$&$b_2$&$b_3,~b_4$\\\cline{3-6}&&$\Lambda=\{3\}$ &$c_1$&$c_2$
  &$c_3,~c_4$
  \\\hline
  \multirow{3}{*}{$k=2$}&\multirow{2}{*}{$\delta=1$}&$\Lambda=\{1,2\}$&$a_{5}+b_{5}$&$a_{6}+b_{6}$
  &\\\cline{3-6}
  &&$\Lambda=\{1,3\}$&$a_{7}+c_{5}$&$a_{8}+c_{6}$&\\\cline{2-6}&$\delta=0$&$\Lambda=\{2,3\}$ &$b_{7}+c_{7}$&$b_{8}+c_{8}$
  &
  \\\hline$k=3$&$\delta=1$&$\Lambda=\{1,2,3\}$&&&$a_9+b_9+c_9$\\\hline
\end{tabular}
\caption{In Example \ref{eg2}, ($\alpha_1=\alpha_2=1,\alpha_3=0; \beta_1=2,\beta_2=0,\beta_3=1.$)}\label{tab2}
\end{table}

Therefore, once given the values of $\alpha_k,\beta_k, 1\leq k\leq M$,  one can immediately write down all queries in the scheme based on Table \ref{tab1}. The queries are usually arranged in the order that $k$ runs from $1$ to $M$, while for each value of $k$ one can arrange the $k$-sums according to an arbitrary order of the $\Lambda$'s because all types $\Lambda\subseteq[M]$ with $|\Lambda|=k$ will occur.

But in Table \ref{tab1} and Table \ref{tab2}  the queries for each $k$ are arranged according to whether $\delta=1$ or $\delta=0$. The reason for this arrangement is that we want to differentiate the mixed symbols (or desired symbols for $k=1$) from the undesired symbols for the convenience of latter analysis. It is important to note that
the order of $\Lambda$'s should reveal no information on $\theta$, so in actual implementation the user arranges the queries according to an arbitrarily given order of all types $\Lambda\subseteq[M]$ with $|\Lambda|=k$.

Then we determine the parameters $\alpha_k,\beta_k$ for the combining part and $\ell,~G$ for the  mixing-expanding part respectively in the next two sections.

\subsection{Parameters in the combining part}\label{sec3c}
The parameters $\alpha_k$ and $\beta_k$ are computed as a result of establishing sufficient conditions for the correctness requirement and the privacy requirement of the PIR scheme.
\begin{itemize}
  \item[(s1)] {\it A sufficient condition for the correctness requirement: }
\vspace{4pt}

  For any $\Lambda\subseteq[M]-\{\theta\}$, denote $\underline{\Lambda}=\Lambda\cup\{\theta\}$. Suppose $|\Lambda|=k$, and denote{\small$$\begin{array}{c}
  Q_{\Lambda}=(q^{(1)}_{\Lambda,[1:\alpha_k]},...,q^{(T)}_{\Lambda,[1:\alpha_k]},q^{(T+1)}_{\Lambda,[1:\beta_k]},...,q^{(N)}_{\Lambda,[1:\beta_k]})\\
  Q_{\underline{\Lambda}}=(q^{(1)}_{\underline{\Lambda},[1:\alpha_{k+1}]},...,q^{(T)}_{\underline{\Lambda},[1:\alpha_{k+1}]},q^{(T+1)}_{\underline{\Lambda},[1:\beta_{k+1}]},...,q^{(N)}_{\underline{\Lambda},[1:\beta_{k+1}]})
  \end{array}$$}Then $Q_{\Lambda}$ is a $(T\alpha_k+(N-T)\beta_k)$-dimensional vector and $Q_{\underline{\Lambda}}$ is a $(T\alpha_{k+1}+(N-T)\beta_{k+1})$-dimensional vector. The sufficient condition is that
  \begin{equation}\label{eqs1}
    (Q_{\Lambda},Q_{\underline{\Lambda}}-\widetilde{U}_{\theta})=Q_{\Lambda}G_k\;,
  \end{equation}
  where $G_k$ is a generator matrix of an MDS code with dimension $T\alpha_k+(N-T)\beta_k$ and information rate $\frac{T}{N}$, and $Q_{\underline{\Lambda}}-\widetilde{U}_{\theta}$ means eliminating the symbols of $U_\theta$ from the coordinates of $Q_{\underline{\Lambda}}$.
\end{itemize}
Eg. in Table \ref{tab2}, for $\Lambda=\{3\}$ it requires that $(c_1,c_2,...,c_6)$ is a codeword in a $[6,4]$ MDS code; for $\Lambda=\{2,3\}$ it requires that $(b_7+c_7,b_8+c_8,b_9+c_9)$ is a codeword in a $[3,2]$ MDS code.
We first explain why the condition (s1) is sufficient for the correctness requirement. Since the user can directly get $Q_{\Lambda}$ and $Q_{\underline{\Lambda}}$ from the servers, he can first recover $Q_{\underline{\Lambda}}-\widetilde{U}_{\theta}$ from (\ref{eqs1}) and then derive $\widetilde{U}_{\theta}$ which contains a part of desired symbols. In a similar way, for $\Lambda$ running through all subsets in $[M]-\{\theta\}$, the user can recover all desired symbols involved in mixed symbols, in addition to the  desired symbols  directly received  from servers when $k=1$, the user can finally collect all the desired symbols needed to recover the desired record $W_\theta$. Note that the total number of the desired symbols (including those appear in mixed symbols) equals the number of $\Lambda$-type sums with $\delta(\Lambda)=1$.  Thus the invertible transformation in (\ref{eqinvert}) implies that
{\small\begin{equation}\label{eqL}L=\sum_{k=1}^{M}\binom{M-1}{k-1}(T\alpha_k+(N-T)\beta_k)\;.\end{equation}}

Because the MDS code we used in (s1) has information rate $\frac{T}{N}$, then we can derive a relation between $\alpha_k$ and $\beta_k$ from the sufficient condition (\ref{eqs1}), i.e., for $1\leq k<M$,
{\small\begin{equation}
  (T\alpha_k+(N-T)\beta_k)\frac{N-T}{T}=T\alpha_{k+1}+(N-T)\beta_{k+1}.
\end{equation}}

\vspace{6pt}

\begin{itemize}
  \item[(s2)] {\it A sufficient condition for the privacy requirement:}

For any subset $\Gamma\subset[N]$ with $|\Gamma|\leq T$, denote by $U_i^{\Gamma}$, $1\leq i\leq M$, the vector consisting of symbols from $U_i$ that are involved in the queries to servers in $\Gamma$. The sufficient condition is that $U_1^{\Gamma},...,U_M^{\Gamma}$ are independent vectors consisting of the same number of coordinates drawn independently and uniformly from $\mathbb{F}_q$.
\end{itemize}
Eg. in Example \ref{eg2} suppose $\Gamma=\{2,3\}$, then we have $U_1^{\Gamma}=(a_2,a_3,a_4,a_6,a_8,a_9)$, $U_2^{\Gamma}=\{b_2,b_3,b_4,b_6,b_8,b_9\}$ and $U_3^{\Gamma}=\{c_2,c_3,c_4,c_6,c_8,c_9\}$. According to (s2) it requires that $U_1^{\Gamma}, U_2^{\Gamma}, U_3^{\Gamma}$ are independent and each one is composed of $6$ independent symbols.
Recall in Remark \ref{re2} we assume that for each record $W_i=(w_{i,1},...,w_{i,L})\in\mathbb{F}_q^L$, the symbols $w_{i,1},...,w_{i,L}$ are drawn independently and uniformly from $\mathbb{F}_q$. Moreover, $W_i$ and $W_j$ are independent for $i\neq j$. The following proposition states an evident fact and we omit its proof.
\begin{proposition}\label{propind}
Suppose $W_i=(w_{i,1},...,w_{i,L})\in\mathbb{F}_q^L$ is composed of $L$ symbols drawn independently and uniformly from $\mathbb{F}_q$, and $(u_{i,1},...,u_{i,h})=W_iS$ for some matrix $S\in\mathbb{F}_q^{L\times h}$, $1\leq h\leq L$. Then $u_{i,1},...,u_{i,h}$ are independent if and only if $S$ has rank $h$.
\end{proposition}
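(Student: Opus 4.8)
The plan is to reduce the statement to a one-line fact about linear images of the uniform distribution on $\mathbb{F}_q^L$. First I would fix the linear map $\phi:\mathbb{F}_q^L\to\mathbb{F}_q^h$, $\phi(x)=xS$, so that $(u_{i,1},\dots,u_{i,h})=\phi(W_i)$, and note that its image $\phi(\mathbb{F}_q^L)$ is the span of the rows of $S$, a subspace of $\mathbb{F}_q^h$ of dimension $r:={\rm rank}(S)$. The key observation is that, since $W_i$ is uniform on $\mathbb{F}_q^L$ and $\phi$ is linear, $\phi(W_i)$ is uniform on $\phi(\mathbb{F}_q^L)$: every non-empty fiber $\phi^{-1}(y)$ is a coset of $\{x:xS=0\}$, hence has exactly $q^{L-r}$ elements, so all values in the image are equally likely. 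In the $q$-ary entropy scale adopted in Remark \ref{re2} this says $H(u_{i,1},\dots,u_{i,h})=\log_q|\phi(\mathbb{F}_q^L)|=r$.

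With this in hand the two implications are short. For the ``if'' part, if $r=h$ then $\phi$ is onto, $(u_{i,1},\dots,u_{i,h})$ is uniform on all of $\mathbb{F}_q^h$, and a vector uniform on $\mathbb{F}_q^h$ has mutually independent, uniformly distributed coordinates. For the ``only if'' part I would argue by contraposition: if $r<h$ the columns of $S$ are linearly dependent, so $\sum_{j=1}^h c_jS^{(j)}=0$ for some nonzero $(c_1,\dots,c_h)$, where $S^{(j)}$ is the $j$-th column; then $\sum_{j=1}^h c_ju_{i,j}=W_i\big(\sum_{j=1}^h c_jS^{(j)}\big)=0$ holds identically. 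Hence $(u_{i,1},\dots,u_{i,h})$ is confined to the proper subspace $\{y\in\mathbb{F}_q^h:\sum_j c_jy_j=0\}$, so it is not uniform on $\mathbb{F}_q^h$ and its coordinates cannot be independent and uniform; equivalently $H(u_{i,1},\dots,u_{i,h})=r$ falls short of its maximal possible value $h$.

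The argument is entirely elementary, so I do not expect a genuine obstacle; the only point that deserves a word of care is the intended meaning of ``independent''. A coordinate coming from a zero column of $S$ is almost surely constant, and a constant is statistically independent of everything, so bare statistical independence of $u_{i,1},\dots,u_{i,h}$ would not by itself force ${\rm rank}(S)=h$. The biconditional I want is exactly the one needed in condition (s2) and consistent with (\ref{q1}) and Remark \ref{re2}, where the coordinates are required to be \emph{independent and each uniformly distributed on} $\mathbb{F}_q$ --- equivalently, $(u_{i,1},\dots,u_{i,h})$ uniform on $\mathbb{F}_q^h$, equivalently $H(u_{i,1},\dots,u_{i,h})=h$. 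I would state this reading at the outset, after which the two displays above complete the proof.
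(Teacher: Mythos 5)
Your proof is correct. The paper itself offers no proof to compare against --- it declares Proposition \ref{propind} ``an evident fact'' and omits the argument --- so you have supplied the missing details rather than diverged from them. The core computation (the image of $x\mapsto xS$ is the $r$-dimensional row space of $S$, every nonempty fiber is a coset of the kernel of size $q^{L-r}$, hence $W_iS$ is uniform on that image and $H(u_{i,1},\dots,u_{i,h})=r$ in the $q$-ary scale of Remark \ref{re2}) is exactly the right reduction, and both directions then follow in one line each.

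Your caveat about the meaning of ``independent'' is a genuine and worthwhile point, not a pedantic one: with a zero column of $S$ the corresponding coordinate is identically zero, which is statistically independent of everything, so the literal ``only if'' direction fails unless independence is read as ``mutually independent and each uniform on $\mathbb{F}_q$'' (equivalently, the joint vector is uniform on $\mathbb{F}_q^h$, equivalently $H(u_{i,1},\dots,u_{i,h})=h$). That is plainly the reading intended by (\ref{q1}), Remark \ref{re2}, and the use of the proposition in condition (s2), and under it your biconditional is exactly right; note also that in this linear setting each coordinate is automatically either constant or uniform, so the zero-column case is in fact the only way bare independence can hold with ${\rm rank}(S)<h$. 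Stating the intended reading up front, as you propose, is the correct fix.
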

    We now explain why (s2) is sufficient for the privacy requirement. For any colluding subset $\Gamma\subset[N]$ with $|\Gamma|\leq T$, the queries to servers in $\Gamma$ are computed as $(U_1^{\Gamma},U_2^{\Gamma},...,U_M^{\Gamma})C$ after the combining part, where $C$ is the combining matrix which is fixed for given values of $M,N,T$. Moreover, $U_i^{\Gamma}=W_i\tilde{S}^{\Gamma}_i$ for some matrix $\tilde{S}^{\Gamma}_i$, then to clarify the privacy requirement it is sufficient to show that $\tilde{S}^{\Gamma}_1,...,\tilde{S}^{\Gamma}_M$ are identically distributed. By the condition (s2) and Proposition \ref{propind}, it implies that $\tilde{S}^{\Gamma}_1,...,\tilde{S}^{\Gamma}_M$ are all composed of the same number of linearly independent columns. Since $\tilde{S}^{\Gamma}_i$'s are actually generated from the random matrices $S_i$'s which are unknown to the servers, the colluding subset $\Gamma$ cannot tell the difference between the $\tilde{S}^{\Gamma}_i$'s. Therefore, the privacy requirement is satisfied.

According to (\ref{eqs1}) the undesired symbols are expanded by using independent MDS codes, so dependence may only happen within the same MDS codes. However, from the MDS property and Proposition \ref{propind}, any $T$ symbols of a codeword of an $[N,T]$ MDS code are independent. So based on the general scheme, a sufficient condition for (s2) is that for $1\leq k<M$, and for any $x\in\mathbb{N}$ with $0\leq x\leq T$,
\begin{equation}\label{eqpriv}{\small
  x(\alpha_k+\alpha_{k+1})+(T-x)(\beta_k+\beta_{k+1})\leq T\alpha_k+(N-T)\beta_k}.
\end{equation}
Actually in (\ref{eqpriv}), $x\triangleq|\Gamma\cap[T]|$, thus the left side of (\ref{eqpriv}) is the number of symbols from each undesired record involved in the codeword $(Q_{\Lambda},Q_{\underline{\Lambda}}-\widetilde{U}_{\theta})$ which according to (\ref{eqs1}) is  an MDS codeword, and the right side of (\ref{eqpriv}) equals the dimension of the MDS code.

Therefore, $\alpha_k,\beta_k$ are actually the integral solutions to (\ref{eqs1}) and (\ref{eqpriv}). We further find out the condition
$$\left\{\begin{array}{l}T\alpha_{k+1}=(N-T)\beta_k\\\alpha_k+\alpha_{k+1}=\beta_k+\beta_{k+1}\end{array}\right.$$
implies (\ref{eqs1}) and (\ref{eqpriv}) simultaneously. Thus we only need to solve the following simplified equations on $\alpha_k$ and $\beta_k$, i.e.
\begin{equation}\label{eq}
\left\{\begin{array}{l}T\alpha_{k+1}=(N-T)\beta_k\\\alpha_k+\alpha_{k+1}=\beta_k+\beta_{k+1}\\\alpha_k,\beta_k\in \mathbb{N},~~1\leq k< M\end{array}\right.
\end{equation}

From the recursive relations in (\ref{eq}) we get two geometric series, i.e.,
\begin{equation}\label{eqser}{\small
\left\{\begin{array}{l}
T\alpha_{k+1}+(N-T)\beta_{k+1}=(\frac{N-T}{T})^k(T\alpha_1+(N-T)\beta_1)\\
\alpha_{k+1}-\beta_{k+1}=(-1)^k(\alpha_1-\beta_1)
\end{array}\right.}
\end{equation}
By giving proper initial values to $\alpha_1$ and $\beta_1$, we can finally get the solutions to (\ref{eq}). Specifically, when $N\geq 2T$, the first series in (\ref{eqser}), i.e., $\{T\alpha_{k}+(N-T)\beta_{k}\}_{1\leq k\leq M}$ is increasing, so we set $\alpha_1=t^{M-2}$ and $\beta_1=0$ to ensure $\alpha_k,\beta_k\in \mathbb{N}$ for $1\leq k< M$. Under this initial values, we get the solutions
\begin{equation}\label{eqsolution1}
\left\{\begin{array}{l}\alpha_k= \frac{(n-t)^{k-2}-(-t)^{k-2}}{n}(n-t)t^{M-k} \\
    \beta_k = \frac{(n-t)^{k-1}-(-t)^{k-1}}{n}t^{M-k}
\end{array}\right.
\end{equation}
where $n=\frac{N}{d}, t=\frac{T}{d}$ and $d={\rm gcd}(N,T)$. When $T<N<2T$, the series $\{T\alpha_{k}+(N-T)\beta_{k}\}_{1\leq k\leq M}$ is decreasing, so we alternatively set $\alpha_M=0,\beta_M=(n-t)^{M-2}$, and get the solutions
\begin{equation}\label{eqsolution2}
\left\{\begin{array}{l}\alpha_k= \frac{t^{M-k}-(t-n)^{M-k}}{n}(n-t)^{k-1} \\
    \beta_k = \frac{t^{M-k-1}-(t-n)^{M-k-1}}{n}t(n-t)^{k-1}
\end{array}\right.
\end{equation}
In summary, (\ref{eqsolution1}) and (\ref{eqsolution2}) respectively give the solution to (\ref{eq}) for the case $N\geq 2T$ and  the case $T<N<2T$. In both cases, one can check that
\begin{equation}\label{eqeq}
T\alpha_{k}+(N-T)\beta_{k}=d(n-t)^{k-1}t^{M-k}\;.
\end{equation}

\subsection{Parameters in the mixing-expanding part }\label{secmixpart}
In this section we determine the parameter $\ell$ and the matrix $G$ in (\ref{eqinvert}). This is equivalent to determining the MDS codes used in the mixing-expanding part for the undesired records.

For any undesired index $i\in[M]-\{\theta\}$, denote $\mathcal{I}_i=\{\Lambda\mid\Lambda\subseteq[M]-\{\theta\},~i\in\Lambda\}$. Thus the set
\begin{equation}\label{eqMDS}\{(Q_{\Lambda},Q_{\underline{\Lambda}})\mid\Lambda\in\mathcal{I}_i\}\end{equation}covers all coordinates of $U_i$ and each coordinate appears exactly once. This implies that
\begin{equation}\label{eqL2}{\small
L=\sum_{k=1}^{M-1}\binom{M-2}{k-1}(T(\alpha_k+\alpha_{k+1})+(N-T)(\beta_k+\beta_{k+1}))
}\end{equation}
Since our solutions of $\alpha_k,\beta_k$ satisfying (\ref{eqeq}), one can check that (\ref{eqL2}) coincides with (\ref{eqL}) , while the former computes $L$ by counting the number of undesired symbols and the latter computed by counting the number of desired symbols. This consistence just verifies application of Step (a1) throughout the scheme implementation.

Moreover, as required by the condition (s1), $(Q_{\Lambda},Q_{\underline{\Lambda}}-\widetilde{U}_{\theta})$ is a codeword of an MDS code. Thus in (\ref{eqMDS}) a total number of $\sum_{k=1}^{M-1}\binom{M-2}{k-1}=2^{M-2}$ MDS codes are involved. Actually, these MDS codes can be indexed by $\mathcal{I}_i$. From (\ref{eqs1}) we also know that for each $\Lambda\in\mathcal{I}_i$ with $|\Lambda|=k$, the corresponding MDS code has dimension $T\alpha_k+(N-T)\beta_k$ and information rate $\frac{T}{N}$, and we denote its generator matrix as $G_k$. Furthermore, according the condition (s2), these MDS codewords must be mutually independent.

However, for any $\Lambda\in\mathcal{I}_i$, each coordinate of $(Q_{\Lambda},Q_{\underline{\Lambda}}-\widetilde{U}_{\theta})$ is a $\Lambda$-type sum. By linearity of the MDS code we used, it is sufficient to require that for all $j\in\Lambda$, the corresponding coordinates of $U_j$ form a codeword from the same MDS code. For example, in table \ref{tab2} by the condition (s1) it requires that $(b_7+c_7,b_8+c_8,b_9+c_9)$ is an MDS codewords, for which it is sufficient to require that both $(b_7,b_8,b_9)$ and $(c_7,c_8,c_9)$ are codewords from the same MDS code.

Therefore, if we arrange the elements in $\mathcal{I}_i$ according to an increasing order of cardinality, i.e., $\mathcal{I}_i=\{\Lambda_1,...,\Lambda_{2^{M-2}}\}$ where $|\Lambda_{j}|\leq|\Lambda_{j'}|$ for $j<j'$, and rearrange the coordinates of $U_i$ correspondingly, i.e.,
\begin{equation}\label{equ}{\small U_i=(U_{i,\Lambda_1},U_{i,\underline{\Lambda}_1},U_{i,\Lambda_2},U_{i,\underline{\Lambda}_2},...,U_{i,\Lambda_{2^{M-2}}},U_{i,\underline{\Lambda}_{2^{M-2}}})}\end{equation}
where $U_{i,\Lambda_j}$ contains the coordinates of $U_i$ that appear in  $Q_{\Lambda_j}$ and $\underline{\Lambda}_j=\Lambda_j\cup\{\theta\}$ for $1\leq j\leq 2^{M-2}$, then one can see
\begin{equation}\label{eqg1}U_i=(U_{i,\Lambda_1},U_{i,\Lambda_2},...,U_{i,\Lambda_{2^{M-2}}})\widetilde{G}\end{equation} where
\begin{equation}\label{eqg2}
\widetilde{G}={\rm diag}[I_1\otimes G_1,I_{\binom{M-2}{1}}\otimes G_2,...,I_{\binom{M-2}{k-1}}\otimes G_k,...,I_1\otimes G_{M-1}].
\end{equation}
Here, ${\rm diag}[A,B]$ denotes a diagonal block matrix with $A$ and $B$ lying on the diagonal, $I_j$ denotes the $j\times j$ identity matrix, and recall that $G_k$ is a generator matrix of a fixed MDS code with dimension $T\alpha_k+(N-T)\beta_k$ and information rate $\frac{T}{N}$. Because for each $k\in[M-1]$, there are $\binom{M-2}{k-1}$ $\Lambda$'s in $\mathcal{I}_i$ with $|\Lambda|=k$, and each $\Lambda$ corresponds to an MDS code with generator matrix $G_k$, thus $I_{\binom{M-2}{k-1}}\otimes G_k$ denotes the MDS encoding of the corresponding coordinates in the $\Lambda$-type sums with $|\Lambda|=k$. From (\ref{eqg1}) and (\ref{eqg2}), one can see that $U_i$ is actually the direct sum of all the independent MDS codewords, which coincides with the requirement of  (s1) and (s2). Finally, the order that we take symbols from $U_i$ in the combining part as described in Section \ref{sec3b} is a little different from that presented in (\ref{equ}), so the actual form of $G$ can be obtained from $\widetilde{G}$ in (\ref{eqg2}) through a proper column permutation.

From (\ref{eqg1}), one can compute the value of $\ell$, i.e.
{\small\begin{eqnarray}
 \ell &=& \sum_{j=1}^{2^{M-2}}|\Lambda_j|\nonumber\\
  &=&\sum_{k=1}^{M-1}\binom{M-2}{k-1}(T\alpha_k+(N-T)\beta_k)\nonumber \\
  &\stackrel{(a)}{=}&d\sum_{k=1}^{M-1}\binom{M-2}{k-1}(n-t)^{k-1}t^{M-k}\nonumber\\
  &=&Tn^{M-2}\nonumber
\end{eqnarray}}
where {\small$(a)$} is from (\ref{eqeq}).

\begin{remark}
Note that $\mathbb{F}_q$ is the base field for this scheme, and its size depends on  parameters of the MDS codes used in the mixing-expanding part. From (\ref{eqg2}) one can see a total of $M-1$ MDS codes over $\mathbb{F}_q$ have been used, each of which has length $\frac{N}{T}(T\alpha_k+(N-T)\beta_k)$, $1\leq k<M$. Therefore, combining with (\ref{eqeq}) it must have $$q\geq \frac{N}{T}(T\alpha_k+(N-T)\beta_k)=N(n-t)^{k-1}t^{M-k-1}.$$
It is sufficient to require $q\geq\max_{1\leq k<M}\{N(n-t)^{k-1}t^{M-k-1}\}=\max\{Nt^{M-2},N(n-t)^{M-2}\}$. One can see the capacity-achieving scheme in \cite{Sun&Jafar16:ColludPIR} works over a finite field $\mathbb{F}_q$ with $q\geq\max\{N^2T^{M-2},N^2(N-T)^{M-2}\}$. Thus we reduce the field size by a factor of $\frac{1}{Nd^{M-2}}$.

\end{remark}

\subsection{Properties of the scheme}
In this section, we verify the scheme described in Section \ref{sec3b} has properties: (1) correctness and privacy as required in Section \ref{sec2}; (2) optimal sub-packetization; (3) achieving the capacity.

Because we have stated in Section \ref{sec3c} that the conditions (s1) and (s2) are sufficient for the requirement of correctness and privacy, and the parameters we determined in Section \ref{sec3c} and Section \ref{secmixpart} ensure that our scheme can satisfy (s1) and (s2), thus our PIR scheme satisfies correctness and privacy.

Then we compute subpacektization of the scheme, that is, computing the value of $L$. By (\ref{eqL}) and (\ref{eqeq}), we have
\begin{equation*}
L=\sum_{k=1}^M\binom{M-1}{k-1}d(n-t)^{k-1}t^{M-k}
=dn^{M-1}\;,
\end{equation*}
attaining the lower bound on $L$ we proved in Theorem \ref{thm2}. As a result, our scheme achieves the optimal sub-packetization.

Finally, the rate of our scheme equals $\frac{L}{D}$, where $D$ is number of symbols downloaded from all servers. From Table \ref{tab1} one can see
\begin{eqnarray*}
D&=&\sum_{k=1}^M\binom{M}{k}(T\alpha_k+(N-T)\beta_k)\\
&=&\sum_{k=1}^M\binom{M}{k}d(n-t)^{k-1}t^{M-k}\\
&=&\frac{d(n^M-t^M)}{n-t}\;.
\end{eqnarray*}
Therefore, the rate equals $\frac{L}{D}=dn^{M-1}\times\frac{n-t}{d(n^M-t^M)}=\frac{1-\frac{T}{N}}{1-(\frac{T}{N})^M}$ attaining capacity for the PIR schemes.

\begin{corollary}
For the nontrivial case, i.e., $M\geq 2$ and $1\leq T<N$, the optimal sub-packetization for a linear capacity-achieving $T$-private PIR scheme for $N$ replicated servers and $M$ records, is $dn^{M-1}$, where $d={\rm gcd}(N,T)$ and $n=\frac{N}{d}$.
\end{corollary}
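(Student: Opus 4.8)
The plan is to assemble the two halves that have already been established, since the corollary is exactly the statement that the lower bound of Section~III and the construction of Section~IV coincide. First I would invoke Theorem~\ref{thm2}: under $M\geq 2$ and $N>T\geq 1$, every linear capacity-achieving $T$-private PIR scheme with $M$ records and $N$ replicated servers has sub-packetization $L\geq dn^{M-1}$, with $d=\gcd(N,T)$ and $n=N/d$. This gives the optimality in the sense that no such scheme can do strictly better.

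Second, I would exhibit the scheme of Sections~IV-B through IV-D as a witness that $dn^{M-1}$ is attained. Three things must be recorded about it. \emph{(i) Correctness and privacy:} it was shown that conditions (s1) and (s2) are sufficient for the correctness and privacy requirements of Section~II, and the parameters $\alpha_k,\beta_k$ given by \eqref{eqsolution1}/\eqref{eqsolution2} together with $\ell$ and $G$ fixed in Section~IV-D were chosen precisely to guarantee (s1) and (s2); one should only note that the two regimes $N\geq 2T$ and $T<N<2T$ exhaust $1\leq T<N$, that in each case \eqref{eq} admits nonnegative integer solutions, and that the field size bound in the Remark lets the required MDS codes exist. \emph{(ii) Sub-packetization:} combining \eqref{eqL} with the identity \eqref{eqeq} and the binomial theorem,
\begin{equation*}
L=\sum_{k=1}^{M}\binom{M-1}{k-1}\bigl(T\alpha_k+(N-T)\beta_k\bigr)=\sum_{k=1}^{M}\binom{M-1}{k-1}d(n-t)^{k-1}t^{M-k}=dn^{M-1}.
\end{equation*}
\emph{(iii) Capacity:} with $D=\sum_{k=1}^{M}\binom{M}{k}d(n-t)^{k-1}t^{M-k}=\frac{d(n^M-t^M)}{n-t}$, the rate is $L/D=\frac{1-T/N}{1-(T/N)^M}$, which equals $\mathcal{C}_{\mbox{\tiny PIR}}$ by Theorem~\ref{thm1}, so the scheme is capacity-achieving and hence falls under the scope of Theorem~\ref{thm2}.

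Putting the two halves together, $dn^{M-1}$ is simultaneously a lower bound for all linear capacity-achieving $T$-private PIR schemes (Theorem~\ref{thm2}) and the sub-packetization of an explicit such scheme (Section~IV); therefore the minimum achievable sub-packetization is exactly $dn^{M-1}$, which is the assertion of the corollary.

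There is no genuine obstacle in the corollary itself — all the mathematical content lives in the proof of Theorem~\ref{thm2} and in the verification that the Section~IV construction satisfies (s1) and (s2) with the claimed parameters. The only point that deserves a moment's care is the case split in the construction: one must confirm that \eqref{eqsolution1} and \eqref{eqsolution2} between them cover every parameter pair with $1\leq T<N$, that the resulting $\alpha_k,\beta_k$ are nonnegative integers for $1\leq k<M$, and that the boundary values $\alpha_1,\beta_1$ (or $\alpha_M,\beta_M$) chosen there are consistent with \eqref{eqL}; once that is in place, the corollary is immediate.
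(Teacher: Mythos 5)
Your proposal is correct and follows exactly the paper's route: the corollary is obtained by combining the lower bound of Theorem~\ref{thm2} with the verification in Section~IV that the constructed scheme is a linear, correct, private, capacity-achieving scheme whose sub-packetization evaluates to $dn^{M-1}$ via \eqref{eqL}, \eqref{eqeq}, and the binomial theorem. Your added remarks about checking the two regimes $N\geq 2T$ and $T<N<2T$ and the integrality of $\alpha_k,\beta_k$ are sound diligence but do not change the argument.
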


\section{Conclusions}
Sub-packetization is an important parameter related to the implementation complexity of a scheme.
In this work we determine the optimal sub-packetization for linear capacity-achieving schemes with replicated servers. The relations between sub-packetization, download rate, and field size for PIR schemes deserve further studies in the future.


\appendices
\section{Proofs for Lemma \ref{lem2} in Section \ref{sec2b}}\label{appenA}
\noindent{\bf Lemma~\ref{lem2}.}
For any $\theta,\theta^\prime\in[M]$, and for any subset $\Lambda\subseteq[M]$ and $\Gamma\subseteq[N]$ with $|\Gamma|\leq T$,
\begin{equation*}
H(\mathcal{A}^{\Gamma}_{\theta}| \mathcal{W}_\Lambda,\mathcal{Q})=H(\mathcal{A}^{\Gamma}_{\theta^\prime}| \mathcal{W}_\Lambda,\mathcal{Q})\;.\end{equation*}
\begin{proof}
From the privacy condition (\ref{eq3}), we have that $I(\theta;\mathcal{W}_\Lambda,\mathcal{Q}^{\Gamma}_\theta)=0$, implying
$H(\mathcal{W}_\Lambda,\mathcal{Q}^{\Gamma}_\theta)=H(\mathcal{W}_\Lambda,\mathcal{Q}^{\Gamma}_{\theta^\prime})$.
Similarly, $H(\mathcal{A}^{\Gamma}_\theta,\mathcal{W}_\Lambda,\mathcal{Q}^{\Gamma}_\theta)=H(\mathcal{A}^{\Gamma}_{\theta^\prime},\mathcal{W}_\Lambda,\mathcal{Q}^{\Gamma}_{\theta^\prime})$.
It immediately follows that $H(\mathcal{A}^{\Gamma}_{\theta}| \mathcal{W}_\Lambda,\mathcal{Q}^{\Gamma}_\theta)=H(\mathcal{A}^{\Gamma}_{\theta^\prime}| \mathcal{W}_\Lambda,\mathcal{Q}^{\Gamma}_{\theta^\prime})$.

Then the lemma follows from the claim below:

{\it Claim: }For any $\theta\in[M]$, and for any subset $\Lambda\subseteq[M]$ and $\Gamma\subseteq[N]$,
$H(\mathcal{A}^{\Gamma}_\theta| \mathcal{W}_\Lambda,\mathcal{Q}^{\Gamma}_\theta)=H(\mathcal{A}^{\Gamma}_\theta| \mathcal{W}_\Lambda,\mathcal{Q})$.

{\it Proof of the claim.}
On the one hand, it obviously holds $H(\mathcal{A}^{\Gamma}_\theta| \mathcal{W}_\Lambda,\mathcal{Q}^{\Gamma}_\theta)\geq H(\mathcal{A}^{\Gamma}_\theta| \mathcal{W}_\Lambda,\mathcal{Q})$. On the other hand ,
{\small\begin{eqnarray*}
&&H(\mathcal{A}^{\Gamma}_\theta| \mathcal{W}_\Lambda,\mathcal{Q}^{\Gamma}_\theta)-H(\mathcal{A}^{\Gamma}_\theta| \mathcal{W}_\Lambda,\mathcal{Q})\\
&=&I(\mathcal{A}^{\Gamma}_\theta;\mathcal{Q}\setminus \mathcal{Q}^{\Gamma}_\theta|\mathcal{W}_\Lambda,\mathcal{Q}^{\Gamma}_\theta)\\
&\leq& I(\mathcal{A}^{\Gamma}_\theta,\mathcal{W}_{[M]-\Lambda} ;\mathcal{Q}\setminus \mathcal{Q}^{\Gamma}_\theta|\mathcal{W}_\Lambda,\mathcal{Q}^{\Gamma}_\theta)\\
&=&I(\mathcal{W}_{[M]-\Lambda} ;\mathcal{Q}\setminus \mathcal{Q}^{\Gamma}_\theta|\mathcal{W}_\Lambda,\mathcal{Q}^{\Gamma}_\theta)+ I(\mathcal{A}^{\Gamma}_\theta;\mathcal{Q}\setminus \mathcal{Q}^{\Gamma}_\theta|\mathcal{W}_{[M]},\mathcal{Q}^{\Gamma}_\theta)\\
&\stackrel{(a)}{=}&0\;.
\end{eqnarray*}}
Note the equality in {\small$(a)$} comes from the facts $I(\mathcal{W}_{[M]-\Lambda} ;\mathcal{Q}\setminus \mathcal{Q}^{\Gamma}_\theta|\mathcal{W}_\Lambda,\mathcal{Q}^{\Gamma}_\theta)=0$ and $I(\mathcal{A}^{\Gamma}_\theta;\mathcal{Q}\setminus \mathcal{Q}^{\Gamma}_\theta|\mathcal{W}_{[M]},\mathcal{Q}^{\Gamma}_\theta)=0$, where the former is due to  (\ref{eq0}) and the latter is from
(\ref{eq1}).
\end{proof}

\section{Proof of the Lemmas in Section \ref{secc}}\label{appenB}
\noindent{\bf Lemma~\ref{lem5}.}
Let $a,b,m\in\mathbb{N}$. Suppose $d_1={\rm gcd}(a,b),d_2={\rm gcd}(a^m,\sum^{m}_{i=0}a^{m-i}b^i)$, then $d_2=d_1^m$.
\vspace{4pt}

\begin{proof}
Obviously $d_1^m|d_2$. Let $a=a_1d_1,b=b_1d_1,c=\frac{d_2}{d_1^m}$, then ${\rm gcd}(a_1,b_1)=1$ and $c|a_1^m,c|\sum^{m}_{i=0}a^{m-i}_1b^i_1$. If $c\neq 1$, then for any  prime factor $p$ of $c$, it must have $p|a_1^m$ and $p|\sum^{m}_{i=0}a^{m-i}_1b^i_1$, implying $p|a_1$ and $p|b_1$, which contradicts the fact ${\rm gcd}(a_1,b_1)=1$.
\end{proof}

\vspace{6pt}

\noindent{\bf Lemma~\ref{thm3}.}
Let $V_1,...,V_N$ be $N$ linear subspaces of $\mathbb{F}^{L}$ and denote $V=\sum^N_{i=1}V_i$. Suppose $T\in\mathbb{N}$ and $1\leq T<N$. If for any subset $\Gamma\subseteq[N]$ with $|\Gamma|=T$ it holds $\dim(\sum_{i\in\Gamma}V_i)=\frac{T}{N}\dim V$, then
$N|\dim V$ and $\dim(V_i)=\frac{1}{N}\dim V$ for $1\leq i\leq N$. Moreover, $V=\oplus^N_{i=1}V_i$.

\vspace{4pt}

\begin{proof}
We first give two claims which hold under the hypothesis of the lemma:

{\it Claim~1:  There exists $i_0\in[N]$ such that ${\rm dim}V_{i_0}\geq\lceil\frac{{\rm dim}V}{N}\rceil$.}

{\it Claim~2: For $1\leq s\leq T$, for any subset $\Gamma'\subseteq[N]$ with $|\Gamma'|=s$, it holds
\begin{equation}\label{eqlinear}{\rm dim}\sum_{i\in\Gamma'}V_i\leq\frac{T}{N}{\rm dim}V-(T-s)\lceil\frac{{\rm dim}V}{N}\rceil\;.\end{equation}}

Note that Claim 1 is a direct result of the fact $\dim V\leq\sum_{i=1}^N\dim V_i$. Next we prove Claim 2 by induction on $s$. For $s=T$, (\ref{eqlinear}) holds from the hypothesis of the lemma. Then we assume that (\ref{eqlinear}) holds for some $s>1$ and prove it also holds for $s-1$.

On the contrary, we assume there exists
$\Gamma'\subseteq [N]$ with $|\Gamma'|=s-1$, such that \begin{equation}\label{eqassume}{\rm dim}\sum_{i\in\Gamma'}V_i>\frac{T}{N}{\rm dim}V-(T-s+1)\lceil\frac{{\rm dim}V}{N}\rceil\;.\end{equation} Select $\Gamma''\subseteq[N]$ such that $\Gamma'\subset\Gamma''$ and $|\Gamma''|=s$. Denote $\overline{\Gamma''}=[N]-\Gamma''$ and $V''=\sum_{i\in\Gamma''}V_i$. Since (\ref{eqlinear}) holds for $s$, it has ${\rm dim}V''\leq\frac{T}{N}{\rm dim}V-(T-s)\lceil\frac{{\rm dim}V}{N}\rceil$.
Therefore,
{\footnotesize\begin{eqnarray*}\dim\sum_{j\in\overline{\Gamma''}}V_j/(V_j\cap V'')&\geq&\dim V-{\rm dim}V''\\
&\geq&\dim V-\frac{T}{N}{\rm dim}V+(T-s)\lceil\frac{{\rm dim}V}{N}\rceil\\
&\geq&\frac{N-s}{N}\dim V.\end{eqnarray*}}
Because $|\overline{\Gamma''}|=N-s$, by a deduction similar to that of Claim 1 we have $\dim V_{j_0}/(V_{j_0}\cap V'')\geq \lceil\frac{{\rm dim}V}{N}\rceil$ for some $j_0\in\overline{\Gamma''}$. Therefore,
{\footnotesize\begin{eqnarray*}
\dim \sum_{i\in\Gamma'\cup\{j_0\}}V_i&=&\dim \sum_{i\in\Gamma'}V_i+\dim V_{j_0}/(V_{j_0}\cap\sum_{i\in\Gamma'}V_i ) \\&\stackrel{(a)}{\geq}&\dim \sum_{i\in\Gamma'}V_i+\dim V_{j_0}/(V_{j_0}\cap V'' )\\
&\stackrel{(b)}{>}&\frac{T}{N}{\rm dim}V-(T-s+1)\lceil\frac{{\rm dim}V}{N}\rceil+\lceil\frac{{\rm dim}V}{N}\rceil\\&=&\frac{T}{N}{\rm dim}V-(T-s)\lceil\frac{{\rm dim}V}{N}\rceil
\end{eqnarray*}}
where {\small $(a)$} holds because $\sum_{i\in\Gamma'}V_i\subset V''$, and {\small $(b)$} is based on the assumption (\ref{eqassume}). However, $|\Gamma'\cup\{j_0\}|=s$, thus we get a contradiction to the induction base that (\ref{eqlinear}) holds for $s$. Hence the assumption (\ref{eqassume}) is not true and Claim 2 holds correctly.

\vspace{4pt}

Set $s=1$, from Claim 2 we have for all $i\in[N]$,
 {\footnotesize$$ \dim V_i\leq \frac{T}{N}\dim V-(T-1)\lceil\frac{\dim V}{N}\rceil\leq\frac{1}{N}\dim V\leq\frac{1}{N}\sum_{i=1}^N\dim V_i,$$}
implying that $\dim V_i=\frac{1}{N}\sum_{i=1}^N\dim V_i$ for all $i\in[N]$. The lemma follows immediately.

\end{proof}

\end{document}